\newtheorem{proposition}{Proposition}
\newtheorem{theorem}{Theorem}
\newtheorem{lemma}{Lemma}
\newcommand{\expectation}{\ensuremath{\mathbb{E}}}
\newcommand{\Expt}{\expectation}
\newcommand{\probability}{\ensuremath{\mathbb{P}}}
\newcommand{\Prob}{\probability}
\begin{document}

\title{Conditions for Robustness of Polar Codes in the Presence of Channel Mismatch}
\author{\IEEEauthorblockN{Mine Alsan}\\
\small\IEEEauthorblockA{Information Theory Laboratory\\
Ecole Polytechnique F\' ed\' erale de Lausanne\\
CH-1015 Lausanne, Switzerland\\
Email: mine.alsan@epfl.ch}
\normalsize}
\maketitle
\pagestyle{empty}
\thispagestyle{empty}
\IEEEpeerreviewmaketitle

\maketitle

\begin{abstract}
%\boldmath
A challenging problem related to the design of polar codes is ``robustness against channel parameter variations'' as stated in Ar{\i}kan's original work. 
In this paper, we describe how the problem of robust polar code design can be viewed as a mismatch decoding problem. We propose conditions 
which ensure a polar encoder/decoder designed for a mismatched B-DMC can be used to communicate reliably. 
In particular, the analysis shows that the original polar code construction method is robust over the class of binary symmetric channels.\\
\end{abstract}

\begin{IEEEkeywords}
Mismatched channels, channel polarization, polar codes.
\end{IEEEkeywords}

\section{Introduction}
In 2007, Ar{\i}kan~\cite{1669570} proposed polar codes as an appealing error correction method based on a phenomenon called channel polarization. 
This class of codes are proved to achieve the symmetric capacity of any binary discrete memoryless channel (B-DMC) 
using low complexity encoders and decoders, and their block error probability is shown to decrease exponentially in 
the square root of the blocklength~\cite{5205856}. 

Two basic channel transformations lie at the heart of channel polarization. Given a B-DMC $W: \mathcal{X}\rightarrow \mathcal{Y}$, the two successive channels characterized by these transformations $W^{-}: \mathcal{X} \rightarrow \mathcal{Y}^{2}$ and $W^{+}: \mathcal{X} \rightarrow \mathcal{Y}^{2}\times\mathcal{X}$ are defined by the following transition probabilities:
\small
\begin{align} 
  &W^{-}(y_1 y_2 \mid u_1) = \displaystyle\sum_{u_2\in\mathcal{X}} \frac{1}{2} W(y_1 \mid u_1 \oplus u_2) W(y_2 \mid u_2), \\
  &W^{+}(y_1 y_2 u_1\mid u_2) = \frac{1}{2} W(y_1 \mid u_1 \oplus u_2) W(y_2 \mid u_2). 
\end{align}
\normalsize
For a blocklength $N=2$, these channels would be indexed as $W_{2}^{(1)}$ and $W_{2}^{(2)}$. In general $N=2^n$, and the channels $W_{N}^{(i)}: \mathcal{X} \rightarrow \mathcal{Y}\times\mathcal{X}^{i-1}$, for $i=1, \ldots, N$, are synthesized by the recursive applications of these plus/minus transformations 
until sufficiently polarized, i.e. they are perfect or completely noisy channels.

The polarization idea is used to propose polar codes, and the recursive process leads to efficient encoding and decoding structures. 
On the encoder side, uncoded data bits are sent only through those perfect channels. For the rest, bits are fixed beforehand and revealed to the decoder as well. On the decoder side,
the synthesized channels lend themselves to a particular decoding procedure referred to as successive cancellation decoder (SCD).
At the i-th stage, on those good channels, the SCD estimates the channel input $u_{i}$ with law $W_{N}^{(i)}(y_{1}^{N} u_{1}^{i-1}\mid u_{i})$ according to maximum likelihood (ML) decision rule for the i-th channel using the previous estimates $\hat{u}_{1}^{i-1}$ 
and supplies the new estimate $\hat{u}_{i}$ to the next stages. The analysis carried in~\cite{1669570} shows that this SCD performs with vanishing error probability. 

A particular aspect of polar codes is that they are channel specific designs. The polarization process is adjusted to the particular channel at hand, whence the index set of the synthesized good channels.   
This set, referred as the information set $\mathcal{A}$, is required both by the encoder and decoder.  
The situation in which this knowledge is partially missing have been already addressed. Let $W$ and $V$ be two given B-DMCs. The following two cases are known to lead to an ordering $\mathcal{A_{V}} \subset \mathcal{A_{W}}$:
If $V$ is a binary erasure channel (BEC) with larger Bhattacharyya parameter than the channel $W$, or $V$ is a stochastically degraded version of $W$~\cite{1669570}. 
These results help the designer to use safely the information set designed for the channel $V$ for communication over $W$.

On the other hand, a critical point is the assumption of the availability of the channel knowledge at the decoder. Indeed, the described SCD not only requires the information set
but also the exact channel knowledge to function. Therefore, if the true channel is unknown, the code design, including the decoding rule, should be based on a mismatched channel \cite{394641}. 
In this work we assume the same SCD rule is kept, but instead of the true channel law a different one is employed in the decision procedure. 
We want to communicate reliably over the channel $W$ using the polar code designed for the mismatched channel $V$
(including the information set, encoder, and decoder), achieving rates up to the symmetric capacity of the mismatched channel $V$. 

The article follows with the preliminaries section, then we explore the results in the subsequent section, and the final section briefly discusses the results. 

\section{Preliminaries}
To assess the performance of mismatched polar codes,
we revisit expressions derived  in \cite{Mine:ISITA12} for the average probability of error under SCD with respect to a mismatched channel.
These derivations follow closely the matched counterparts in~\cite{1669570}. 

The SCD described in the introduction is closely tied to a channel splitting operation.  
After channel combining, the splitting synthesizes the 
channels whose transition probabilities are given by:
\begin{equation}
  W_{N}^{(i)}(y_{1}^{N}u_{1}^{i-1}|u_{i}) = \displaystyle\sum_{u_{i+1}^{N}} \displaystyle\frac{1}{2^{N-1}} W(y_{1}^{N}|u_{1}^{N}),
 \end{equation} 
 where $W(y_{1}^{N}|u_{1}^{N}) = \displaystyle\prod_{i=1}^{N} W(y_i|u_i)$.

We define the likelihood ratio (LR) of a given B-DMC $W$ as $L_{W}(y) = \displaystyle\frac{W(y|1)}{W(y|0)}$. 
Decision functions similar to ML decoding rule can then be defined as
\begin{equation}
d_{W}^{(i)}(y_{1}^{N}, \hat{u}_{1}^{i-1}) = \left\{\begin{array}{ll}
							0, & \hbox{if} \hspace{1mm} L_{W_{N}^{(i)}}\left(y_{1}^{N}, \hat{u}_{1}^{i-1}\right) < 1 \\
							1, & \hbox{if} \hspace{1mm} L_{W_{N}^{(i)}}\left(y_{1}^{N}, \hat{u}_{1}^{i-1}\right) > 1\\
							 * & \hbox{if} \hspace{1mm} L_{W_{N}^{(i)}}\left(y_{1}^{N}, \hat{u}_{1}^{i-1}\right) = 1
						       \end{array}  \right.,
\end{equation}
where $*$ is chosen from the set $\{0, 1\}$ by a fair coin flip.

The polar SCD will decode the received output in $N$ stages using a chain of estimators from $i = 1, \dots, N$ each depending on the previous ones. The estimators are defined as
\begin{equation}
\hat{u}_{i} =  \left\{\begin{array}{ll}
                                 u_{i}, & \hbox{if}  \hspace{3mm} i\in\mathcal{A}^{c} \\
				 d_{W}^{(i)}(y_{1}^{N}, \hat{u}_{1}^{i-1}), &\hbox{if}  \hspace{3mm} i\in\mathcal{A}
                                \end{array} \right..
\end{equation}

Let $Pe(W, V, \mathcal{A})$ denote the best achievable block error probability 
over the ensemble of all possible choices of the set $\mathcal{A}^{c}$ when $|\mathcal{A}| = \lfloor NR \rfloor$
under mismatched successive cancellation decoding with respect to the channel $V$ when the true channel is $W$. Then, one can show that
\begin{equation}\small
Pe(W, V, \mathcal{A}) 
%= \Prob_{W}\left[ \displaystyle\bigcup_{i\in\mathcal{A}} \left\lbrace \hat{U}_{1}^{i-1} = u_{1}^{i-1} \displaystyle\cap \hat{U}_{i} \neq u_{i} \right\rbrace\right] \\
%&= \Prob_{W}\left[\displaystyle\bigcup_{i\in\mathcal{A}} \left\lbrace\ \hat{U}_{1}^{i-1} = u_{1}^{i-1} \displaystyle\cap d_{V}^{(i)}(y_{1}^{N}, \hat{U}_{1}^{i-1}) \neq u_{i} \right\rbrace\right] \\
%&\leq \Prob_{W}\left[\displaystyle\bigcup_{i\in\mathcal{A}} \left\lbrace d_{V}^{(i)}(y_{1}^{N}, u_{1}^{i-1}) \neq u_{i} \right\rbrace\right] \\
%&= \displaystyle\sum_{i\in\mathcal{A}}\sum_{y_{1}^{N}, u_{1}^{N}} \displaystyle\frac{1}{2^N} W_{N}^{(i)}(y_{1}^{N}, u_{1}^{i-1}|u_{i}) \\
%&\hspace{30mm}\mathbf{1}\{\frac{V_{N}^{(i)}(y_{1}^{N}, u_{1}^{i-1}\mid u_{i} \oplus 1)}{V_{N}^{(i)}(y_{1}^{N}, u_{1}^{i-1}\mid u_{i})} > 1\} \\
\leq \displaystyle\sum_{i\in\mathcal{A}} Pe_{N}^{(i)}(W, V),
\end{equation}\normalsize
where $Pe_{N}^{(i)}(W, V)$ is defined as
\begin{multline}\small
\sum_{y_{1}^{N}, u_{1}^{N}} \displaystyle\frac{1}{2^N}W(y_{1}^{N}|u_{1}^{N}) \mathbf{1}\{\frac{V_{N}^{(i)}(y_{1}^{N}, u_{1}^{i-1}\mid u_{i} \oplus 1)}{V_{N}^{(i)}(y_{1}^{N}, u_{1}^{i-1}\mid u_{i})} > 1\} \\
+ \displaystyle\frac{1}{2} \sum_{y_{1}^{N}, u_{1}^{N}} \displaystyle\frac{1}{2^N}W(y_{1}^{N}|u_{1}^{N}) \mathbf{1}\{\frac{V_{N}^{(i)}(y_{1}^{N}, u_{1}^{i-1}\mid u_{i} \oplus 1)}{V_{N}^{(i)}(y_{1}^{N}, u_{1}^{i-1}\mid u_{i})} = 1\}  
\end{multline}\normalsize
with $\mathbf{1}\{.\}$ denoting the indicator function as usual.

For channels symmetrized under the same permutation, the next proposition can be proved using \cite[Corollary 1]{1669570}.
\begin{proposition}\label{prop::Pe_i_sym}
Let $W$ and $V$ be symmetric B-DMCs symmetrized under the same permutation. Then,
\begin{equation}\label{eq::Pe_i_sym}\small
Pe_{N}^{(i)}(W, V) = \displaystyle\sum_{y_{1}^{N}} W(y_{1}^{N}|0_{1}^{N}) \mathbf{H}\left(L_{V_{N}^{(i)}}\left(y_{1}^{N}, 0_{1}^{i-1}\right)\right), 
\end{equation}\normalsize
where $\mathbf{H}\left(L_{V_{N}^{(i)}}\left(y_{1}^{N}, 0_{1}^{i-1}\right)\right)$ is defined as the following sum
\begin{equation}
\mathbf{1}\{L_{V_{N}^{(i)}}\left(y_{1}^{N}, 0_{1}^{i-1}\right) > 1\} 
+ \displaystyle\frac{1}{2} \mathbf{1}\{L_{V_{N}^{(i)}}\left(y_{1}^{N}, 0_{1}^{i-1}\right) = 1\}.
\end{equation}
\end{proposition}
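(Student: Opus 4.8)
The plan is to read the definition of $Pe_{N}^{(i)}(W,V)$ as a uniform average over the transmitted vector $u_1^N$ and to show that \emph{every} summand already equals the single all-zero expression appearing on the right-hand side of the proposition. Merging the two indicator contributions through $\mathbf{H}$, the quantity being averaged is, for a fixed $u_1^N$,
\[
T(u_1^N)=\sum_{y_1^N} W(y_1^N\mid u_1^N)\,\mathbf{H}\!\left(\frac{V_{N}^{(i)}(y_1^N,u_1^{i-1}\mid u_i\oplus1)}{V_{N}^{(i)}(y_1^N,u_1^{i-1}\mid u_i)}\right).
\]
If I can show that $T(u_1^N)$ does not depend on $u_1^N$, then $Pe_{N}^{(i)}(W,V)=2^{-N}\sum_{u_1^N}T(u_1^N)=T(0_1^N)$, and $T(0_1^N)$ is exactly the claimed identity.

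The tool is the covariance of symmetric channels under input shifts, which is what \cite[Corollary~1]{1669570} supplies. Because $W$ and $V$ are symmetrized by the \emph{same} involution $\pi_1$ of $\mathcal{Y}$, the shift by $u_1^N$ induces a \emph{single} involutive bijection $\phi_{u_1^N}$ of $\mathcal{Y}^N$ that serves both channels at once: it satisfies
\[
W(y_1^N\mid u_1^N)=W(\phi_{u_1^N}(y_1^N)\mid 0_1^N),
\]
and, for $b\in\{0,1\}$,
\[
V_{N}^{(i)}(y_1^N,u_1^{i-1}\mid u_i\oplus b)=V_{N}^{(i)}(\phi_{u_1^N}(y_1^N),0_1^{i-1}\mid b).
\]
I would state and use these two identities as the backbone of the argument.

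With these in hand, I would fix $u_1^N$ and change the summation variable in $T(u_1^N)$ from $y_1^N$ to $w_1^N=\phi_{u_1^N}(y_1^N)$, which is legitimate since $\phi_{u_1^N}$ permutes $\mathcal{Y}^N$. The first identity turns the weight $W(y_1^N\mid u_1^N)$ into $W(w_1^N\mid 0_1^N)$, while the second identity, applied to the numerator and the denominator, turns the decision ratio into
\[
\frac{V_{N}^{(i)}(w_1^N,0_1^{i-1}\mid 1)}{V_{N}^{(i)}(w_1^N,0_1^{i-1}\mid 0)}=L_{V_{N}^{(i)}}\!\left(w_1^N,0_1^{i-1}\right),
\]
so $\mathbf{H}(\cdot)$ becomes $\mathbf{H}(L_{V_{N}^{(i)}}(w_1^N,0_1^{i-1}))$. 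Hence $T(u_1^N)=\sum_{w_1^N}W(w_1^N\mid 0_1^N)\,\mathbf{H}(L_{V_{N}^{(i)}}(w_1^N,0_1^{i-1}))$, which is independent of $u_1^N$ and equals the right-hand side; averaging the $2^N$ identical terms then closes the proof.

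The step requiring the most care, and the main obstacle, is the verification of the second identity, because $V_{N}^{(i)}$ is itself a marginalization $2^{-(N-1)}\sum_{u_{i+1}^N}V_N(\cdot)$ over the future inputs. Shifting by $u_1^N$ sends $u_1^{i-1}\mapsto 0_1^{i-1}$ and $u_i\mapsto0$ as wanted, but it simultaneously translates the dummy indices $u_{i+1}^N$; one must check that this is merely a relabelling of the summation set $\{0,1\}^{N-i}$, so that the marginalized value is preserved. It is precisely here that symmetrizing $W$ and $V$ under the \emph{same} $\pi_1$ is indispensable, since one and the same output permutation $\phi_{u_1^N}$ has to normalise the $W$-weight to the all-zero input and the $V$-ratio to $L_{V_{N}^{(i)}}(\cdot,0_1^{i-1})$ at once. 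The degenerate case $L_{V_{N}^{(i)}}=1$ is carried along unchanged and accounts for the factor $\tfrac12$ inside $\mathbf{H}$.
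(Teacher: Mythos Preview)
Your proposal is correct and is precisely the argument the paper points to: the paper does not spell out a proof but states that the proposition ``can be proved using \cite[Corollary~1]{1669570},'' and your change-of-variables via the common symmetrizing permutation $\phi_{u_1^N}$ is exactly that corollary applied simultaneously to the $W$-weight and the $V$-likelihood ratio. Your emphasis that both channels must share the same $\pi_1$ so that a \emph{single} output bijection normalises both factors at once is the crux, and your remark that the translation of the dummy variables $u_{i+1}^N$ is a mere relabelling of the summation range handles the only subtlety.
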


For shorthand notation we will use $L_{V_{N}^{(i)}}\left(y_{1}^{N}\right) \triangleq L_{V_{N}^{(i)}}\left(y_{1}^{N}, 0_{1}^{i-1}\right)$. 
The next proposition explores the recursive structure of the LR computations.

\begin{proposition}\cite{1669570}\label{prop::L_sym_func}
The LRs satisfy the recursion
\begin{align}
&L_{V_{2N}^{(2i-1)}}(y_{1}^{2N}) = \frac{L_{V_{N}^{(i)}}(y_{1}^{N}) + L_{V_{N}^{(i)}}(y_{N+1}^{2N})}{1 + L_{V_{N}^{(i)}}(y_{1}^{N}) L_{V_{N}^{(i)}}(y_{N+1}^{2N})}, \\
  &L_{V_{2N}^{(2i)}}(y_{1}^{2N}) = L_{V_{N}^{(i)}}(y_{1}^{N})L_{V_{N}^{(i)}}(y_{N+1}^{2N}). 
 \end{align}
Hence, the computed LRs can be seen as symmetric functions $f(L_{V_{N}^{(i)}}(y_{1}^{N}), L_{V_{N}^{(i)}}(y_{N+1}^{2N}))$ of the arguments.
 \end{proposition}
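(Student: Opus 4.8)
The plan is to obtain the recursion directly from Ar\i kan's single-step combining identities that relate the channels synthesized at blocklength $2N$ to those at blocklength $N$, and then to pass to likelihood ratios. Writing $a \triangleq L_{V_N^{(i)}}(y_1^N)$ and $b \triangleq L_{V_N^{(i)}}(y_{N+1}^{2N})$ for brevity, the goal reduces to showing that the LR of the minus-child equals $(a+b)/(1+ab)$ and that of the plus-child equals $ab$.

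First I would recall the combining rules. After fixing all previously decoded symbols to zero as dictated by the shorthand $L_{V_N^{(i)}}(y_1^N)=L_{V_N^{(i)}}(y_1^N,0_1^{i-1})$, these read
\begin{align}
&V_{2N}^{(2i-1)}(y_1^{2N},0 \mid u) = \sum_{u'\in\mathcal{X}} \tfrac{1}{2}\, V_N^{(i)}(y_1^N,0 \mid u\oplus u')\, V_N^{(i)}(y_{N+1}^{2N},0 \mid u'), \\
&V_{2N}^{(2i)}(y_1^{2N},0 \mid u) = \tfrac{1}{2}\, V_N^{(i)}(y_1^N,0 \mid u)\, V_N^{(i)}(y_{N+1}^{2N},0 \mid u),
\end{align}
where in the plus-child the preceding symbol $u_{2i-1}$ is itself set to zero, so that the combined argument $u_{2i-1}\oplus u_{2i}$ collapses to $u=u_{2i}$.

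Next I would evaluate each LR as the ratio of the $u=1$ value to the $u=0$ value. For the plus-child the right-hand side is a single product, so the factor $\tfrac12$ cancels and the ratio factors as $V_N^{(i)}(y_1^N,0\mid 1)\,V_N^{(i)}(y_{N+1}^{2N},0\mid 1)$ over $V_N^{(i)}(y_1^N,0\mid 0)\,V_N^{(i)}(y_{N+1}^{2N},0\mid 0)$, i.e.\ exactly $ab$. For the minus-child I would expand the two-term sum over $u'\in\{0,1\}$ in both the numerator ($u=1$) and the denominator ($u=0$), then divide top and bottom by $V_N^{(i)}(y_1^N,0\mid 0)\,V_N^{(i)}(y_{N+1}^{2N},0\mid 0)$; the shared $\tfrac12$ cancels, the numerator becomes $a+b$ and the denominator $1+ab$, giving the claimed rational form. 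The symmetry of $f$ in its two arguments is then evident from both closed forms.

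The step I expect to require the most care is verifying that the shorthand is consistent across recursion levels, namely that setting the whole prefix $u_1^{2i-2}=0$ at blocklength $2N$ indeed corresponds to setting the prefixes of both length-$N$ half-blocks to zero. This is where the linearity of the combining transform enters: the odd/even reindexing together with the XOR map sends the all-zero prefix to all-zero prefixes, so the substitutions used above are legitimate. Aside from this bookkeeping, and the standing assumption that the conditional probabilities in the denominators are nonzero so that the LRs are well defined, the remaining manipulations are purely algebraic.
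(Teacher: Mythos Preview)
Your derivation is correct and is exactly the standard argument from Ar\i kan's paper: apply the single-step combining identities with the all-zero prefix, form the ratio $u=1$ over $u=0$, and normalize. The present paper does not actually supply a proof of this proposition; it simply cites \cite{1669570}, so there is nothing further to compare against. Your remark about the all-zero prefix mapping to all-zero prefixes under the odd/even reindexing is the only nontrivial bookkeeping point, and you handle it correctly.
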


We will use the following notation
\begin{equation}
 \Prob_{W}\left[L_{V_{N}^{(i)}}\left(y_{1}^{N}\right) \geq 1 \right] = \displaystyle\sum_{y_1^N} W(y_{1}^{N}|0_{1}^{N}) \mathbf{1}\{L_{V_{N}^{(i)}}\left(y_{1}^{N}\right) \geq 1\}.
\end{equation}
Similar notation will hold for different sets considered within the indicator function. 
We will also use $\Expt_{W}\left[\mathbf{1}\{.\}\right] \triangleq \Prob_{W}\left[. \right]$ interchangeably.

Given two B-DMCs $W$ and $V$, we denote by $\mathbf{1}\{L_{V}(y) \geq 1\}^{W} \prec_{SD} \mathbf{1}\{L_{V}(y) \geq 1\}^{V}$ if the distribution of the random variable $\mathbf{1}\{L_{V}(y) \geq 1\}$
under the distribution $W(y|0)$ is stochastically dominated by the distribution under $V(y|0)$. 
For a definition of stochastic dominance, see for instance~\cite[Chapter 1.2, Theorem B]{SD:Def}.
By definition the condition implies 
\begin{equation}
 \Expt_{W}[F(\mathbf{1}\{L_{V}(y) \geq 1\})] \leq \Expt_{V}[F(\mathbf{1}\{L_{V}(y)\geq 1\})] 
\end{equation}
holds for any non-decreasing function $F(.)$. As an example, the cases where $W$ and $V$ are BSCs with crossover probabilities $\epsilon_W \leq \epsilon_V \leq 0.5$  
satisfy $\mathbf{1}\{L_{V}(y) \geq 1\}^{W} \prec_{SD} \mathbf{1}\{L_{V}(y) \geq 1\}^{V}$ order.
Similar notation will also be used for the $\mathbf{1}\{L_{V}(y) \leq 1\}$ random variable. 
\subsubsection*{Upper Bounds to $Pe_{N}^{(i)}(W, V)$}
We give two channel parameters which upper bound $Pe_{N}^{(i)}(W, V)$ for symmetric channels. The first one is simply $\Prob_{W}\left[L_{V_{N}^{(i)}}\left(y_{1}^{N}\right) \geq 1\right]$ when $W$ and $V$ are symmetrized under the same permutation.
The second parameter, analogous to the Bhattacharyya parameter defined for the matched scenario and referred to as the mismatched version of this quantity, is
$Z(W, V) = \displaystyle\sum_{y} W(y|0)\sqrt{L_{V}(y)}$. Extending the definition to the $i$-th synthesized channels, one can easily show that the bound
$Pe_{N}^{(i)}(W, V) \leq Z_{N}^{(i)}(W, V) \triangleq Z(W_{N}^{(i)}, V_{N}^{(i)})$ holds for symmetric channels. Naturally, $Pe(W, V)$ and $Z(W, V)$ will denote the parameters when $N=1$ and $i=1$. For the matched case, we will simply write $Pe_{N}^{(i)}(W)$ and $Z_{N}^{(i)}(W)$. 
 
\section{Results}\label{sec:results}
The next theorem states the main result of this paper. 

\begin{theorem}\label{thm::thm2}
Let $W$ and $V$ be two B-DMCs symmetrized under the same permutation which satisfy the following conditions:
\begin{enumerate}
 \item[(i)] $\Prob_{V}\left[L_{V}(y) \leq 1\right] \geq  \Prob_{V}\left[L_{V}(y) \geq 1\right]$,
 \item[(ii)] $\Prob_{W}\left[L_V(y) \geq 1\right] \leq \Prob_{V}\left[L_V(y) \geq 1\right]$,
 \item[(iii)] $\Prob_{W}\left[L_V(y) \leq 1\right] \geq \Prob_{V}\left[L_V(y) \leq 1\right]$.
\end{enumerate}
Then, for any given $N = 2^n$ with $n=1, 2,\ldots$ and any given $i=1, \ldots, N$, we have $Pe_{N}^{(i)}(W, V) \leq Z_{N}^{(i)}(V)$. 
Moreover, $Pe_{N}^{(i)}(W, V) \leq Pe_{N}^{(i)}(V)$ holds for $\forall i\in\mathcal{A}$.
\end{theorem}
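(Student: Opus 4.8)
The plan is to reduce both inequalities to a single pair of single-letter comparisons that I then try to push through Ar\i kan's recursion. Using Proposition~\ref{prop::Pe_i_sym} and $\mathbf{H}(L)=\mathbf{1}\{L>1\}+\tfrac12\mathbf{1}\{L=1\}$, one rewrites
\begin{equation}
Pe_{N}^{(i)}(W,V)=\tfrac12\left(\Prob_{W}\!\left[L_{V_{N}^{(i)}}(y_1^N)\geq1\right]+1-\Prob_{W}\!\left[L_{V_{N}^{(i)}}(y_1^N)\leq1\right]\right),
\end{equation}
and the same identity holds for $Pe_{N}^{(i)}(V)$ with $W$ replaced by $V$. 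Hence both assertions follow once I establish, for every $N=2^n$ and every $i$, the propagated sign conditions $\Prob_{W}[L_{V_{N}^{(i)}}\geq1]\leq\Prob_{V}[L_{V_{N}^{(i)}}\geq1]$ and $\Prob_{W}[L_{V_{N}^{(i)}}\leq1]\geq\Prob_{V}[L_{V_{N}^{(i)}}\leq1]$: conclusion~(2) is then immediate from the displayed identity, while conclusion~(1) follows from the chain $Pe_{N}^{(i)}(W,V)\leq\Prob_{W}[L_{V_{N}^{(i)}}\geq1]\leq\Prob_{V}[L_{V_{N}^{(i)}}\geq1]\leq Z_{N}^{(i)}(V)$, whose last step uses $\mathbf{1}\{L\geq1\}\leq\sqrt{L}$ so that $\Prob_{V}[L_{V_{N}^{(i)}}\geq1]\leq\Expt_{V}[\sqrt{L_{V_{N}^{(i)}}}]=Z_{N}^{(i)}(V)$. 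At $N=1$ these two conditions are exactly hypotheses~(ii) and~(iii), so the whole theorem reduces to propagating them along the plus/minus transforms.

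I would induct on $n$ using Proposition~\ref{prop::L_sym_func}. Conditioned on the all-zero input the two half-blocks are i.i.d., so $L_1\triangleq L_{V_{N}^{(i)}}(y_1^N)$ and $L_2\triangleq L_{V_{N}^{(i)}}(y_{N+1}^{2N})$ are i.i.d.\ under $W$ and under $V$. The minus child is clean: from $L_{V_{2N}^{(2i-1)}}=\frac{L_1+L_2}{1+L_1L_2}$ one has $L_{V_{2N}^{(2i-1)}}\geq1\iff(1-L_1)(1-L_2)\leq0$, so the event depends on $L_1,L_2$ only through the sign indicators $\mathbf{1}\{L_j\geq1\}$. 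Expanding $\Prob[L_{V_{2N}^{(2i-1)}}\geq1]$ into products of these marginal sign probabilities and invoking the $\prec_{SD}$ consequence from the preliminaries---together with condition~(i), which is automatic for the symmetric synthesized channels and hence propagates---transfers (ii) and (iii) to the $(2i-1)$ child.

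The plus child is where I expect the real difficulty. Here $L_{V_{2N}^{(2i)}}=L_1L_2$, and the side of $1$ on which the product falls is \emph{not} a function of the signs of the factors, so sign-level dominance is by itself insufficient: a true channel $W$ may concentrate its mismatched errors on outputs with very large $L_{V_{N}^{(i)}}$, and the product transform amplifies exactly those large ratios. The natural remedy is to strengthen the inductive hypothesis to a full stochastic ordering of the law of $L_{V_{N}^{(i)}}$ (not merely of its sign), under which $\{L_1L_2\geq t\}$ is a monotone up-set and the product of two dominated i.i.d.\ laws is again dominated. The catch is that this stronger ordering is not transparently preserved by the minus transform, which is not monotone in $(L_1,L_2)$; so the crux is to identify an ordering stable under \emph{both} transforms and implied at the root by (i)--(iii) plus the common-symmetrization hypothesis (for the BSC example $L_V$ is two-valued, the sign-level and full orderings coincide, and the induction closes at once). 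As a fallback that clears the plus transform for the weaker bound~(1), I would track the mismatched Bhattacharyya parameter, which is exactly multiplicative, $Z_{2N}^{(2i)}(W,V)=\big(Z_{N}^{(i)}(W,V)\big)^2$, and combine $Z_{N}^{(i)}(W,V)\leq Z_{N}^{(i)}(V)$ with the preliminary bound $Pe_{N}^{(i)}(W,V)\leq Z_{N}^{(i)}(W,V)$.
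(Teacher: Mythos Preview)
Your reduction to propagating (ii) and (iii) along the recursion is exactly the paper's strategy (Theorem~\ref{thm::one_step_preservation_2}), and your treatment of the minus child via $(1-L_1)(1-L_2)\leq 0$ agrees with it. Your identity $Pe=\tfrac12\bigl(\Prob[L\geq1]+1-\Prob[L\leq1]\bigr)$ is in fact sharper than what the paper uses for the second conclusion: the paper invokes the submartingale Lemma~\ref{lem::L_equals_one_supermartingale} on $\Prob[L=1]$ to argue only about $i\in\mathcal{A}$, whereas your identity would yield $Pe_N^{(i)}(W,V)\leq Pe_N^{(i)}(V)$ for \emph{every} $i$ the moment (ii)--(iii) propagate.

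The gap is precisely where you flag it: the plus child. The paper does \emph{not} pass to a full stochastic order on $L$. Its device is a polarization-type identity (Proposition~\ref{prop::P_diff_av}) coming from the symmetry of the LR recursion in Proposition~\ref{prop::L_sym_func},
\[
\Prob_W\!\bigl[L_{V_{2N}^{(j)}}\geq1\bigr]-\Prob_V\!\bigl[L_{V_{2N}^{(j)}}\geq1\bigr]
=\sum_{y_1^N}\bigl[W-V\bigr](y_1^N)\,\sum_{y_{N+1}^{2N}}\bigl[W+V\bigr](y_{N+1}^{2N})\,\mathbf{1}\{\,\cdot\,\},
\]
and then argues (Proposition~\ref{prop::1_increasing_average}) that the inner $[W{+}V]$-average is non-decreasing when viewed as a function of the sign indicator $\mathbf{1}\{L_{V_N^{(i)}}(y_1^N)\geq1\}$; stochastic dominance of this binary indicator under $W$ versus $V$ (Proposition~\ref{prop::SD_preservation_binary}) then forces the sign. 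This $(W-V)(W+V)$ decomposition is the idea you are missing. It deserves close reading: for the minus child the inner sum truly depends on $L_1$ only through its sign, but for the plus child it depends on the actual value of $L_1$, so the passage from ``monotone in $\mathbf{1}\{L_1\geq1\}$'' to the desired inequality is exactly the point you worried about, and the paper is terse there. Your Bhattacharyya fallback does not close the induction either: you verify $Z_{2N}^{(2i)}(W,V)=Z_N^{(i)}(W,V)^2$ preserves $Z_N^{(i)}(W,V)\leq Z_N^{(i)}(V)$ under plus, but you never show this ordering survives the minus transform, so that route to conclusion~(1) is also incomplete as stated.
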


Theorem \ref{thm::thm2} will be proved using the following lemma and the subsequent theorem. 

\begin{lemma}\label{lem::L_equals_one_supermartingale}
The process $\Prob_{W}\left[L_{V_N^{(i)}}(y_1^{2N}) = 1\right]$ is a bounded submartingale in $\left[0, 1\right]$ which converges almost surely to the values $\{0, 1\}$.
\end{lemma}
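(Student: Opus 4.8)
The plan is to recognize this as a statement about the standard polarization tree process and to analyze how the ``atom at one'' of the likelihood-ratio distribution evolves under the two channel transformations. Concretely, I would let $(B_n)_{n\ge 1}$ be i.i.d.\ uniform on $\{-,+\}$, set $\mathcal{F}_n = \sigma(B_1,\dots,B_n)$, and let $P_n$ denote the random variable $\Prob_W[L_{V_N^{(i)}}(y_1^N)=1]$ for the channel $V_N^{(i)}$ reached after $n$ steps along the tree (with $N=2^n$ and the index $i$ determined by $B_1,\dots,B_n$), $W$ being polarized jointly under the same sequence of transformations. The process is adapted, and $P_n\in[0,1]$ makes it bounded and integrable, so it remains to establish the submartingale inequality $\Expt[P_{n+1}\mid\mathcal{F}_n]\ge P_n$ and then the almost-sure convergence to $\{0,1\}$.

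The crux is a one-step computation based on Proposition~\ref{prop::L_sym_func}. Writing $a=L_{V_N^{(i)}}(y_1^N)$ and $b=L_{V_N^{(i)}}(y_{N+1}^{2N})$, which under the all-zero input and the $W$-measure are i.i.d.\ copies with $\Prob[a=1]=\Prob[b=1]=P_n$, I would evaluate the ``equals one'' event after each transform. For the minus transform, $\frac{a+b}{1+ab}=1$ is equivalent to $(a-1)(b-1)=0$, i.e.\ to $\{a=1\}\cup\{b=1\}$, so the updated value is $2P_n-P_n^2$. For the plus transform, $\{ab=1\}\supseteq\{a=1,\,b=1\}$ gives the updated value at least $P_n^2$. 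Averaging the two branches with weight $\tfrac12$ yields
\begin{equation}
\Expt[P_{n+1}\mid\mathcal{F}_n]=\tfrac12\bigl(2P_n-P_n^2\bigr)+\tfrac12\,\Prob[ab=1]\ \ge\ P_n,
\end{equation}
which is the submartingale property; note that only the inequality $\Prob[ab=1]\ge P_n^2$ is needed, so the plus branch need not be computed exactly.

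For the limiting behaviour, boundedness and the submartingale property give a.s.\ convergence $P_n\to P_\infty$ by Doob's convergence theorem, hence $P_{n+1}-P_n\to 0$ a.s. Since the $B_n$ are i.i.d.\ uniform, a.s.\ infinitely many steps are minus-steps; along that (random) subsequence the increment equals exactly $P_n-P_n^2=P_n(1-P_n)$. Passing to the limit along this subsequence, and using continuity of $x\mapsto x(1-x)$, forces $P_\infty(1-P_\infty)=0$, i.e.\ $P_\infty\in\{0,1\}$ a.s., completing the argument.

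I expect the main obstacle to be not the martingale mechanics but the careful handling of the algebraic identity for the minus transform, in particular the degenerate cases where $a$ or $b$ equals $0$ or $\infty$ (outputs assigned zero probability by $V$ under one input). One must check that the identity $\{L^-=1\}=\{a=1\}\cup\{b=1\}$ persists under the extended arithmetic on $[0,\infty]$ — for instance $a=\infty$ forces $L^-=1/b$, which equals $1$ exactly when $b=1$, while $a=b=\infty$ must be traced back to the transition probabilities, where it gives $L^-=0$ rather than an indeterminate form — and confirm that the two half-block likelihood ratios are genuinely independent and identically distributed under the $W$-measure, which is what legitimises the product formula $\Prob[a=1,\,b=1]=P_n^2$.
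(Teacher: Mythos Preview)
Your proposal is correct and follows essentially the same approach as the paper: both compute the minus-branch value exactly as $2P_n-P_n^2$ via the identity $(a-1)(b-1)=0$, bound the plus-branch below by $P_n^2$ via $\{a=1,b=1\}\subseteq\{ab=1\}$, and average to obtain the submartingale inequality; for the limit, both exploit that the minus-step increment is exactly $P_n(1-P_n)$ to force $P_\infty\in\{0,1\}$, the paper phrasing this as $\Expt_{\pm}[|P_{n+1}-P_n|]\ge\tfrac12 P_n(1-P_n)$ in the style of Ar{\i}kan's Proposition~9 while you argue along the random subsequence of minus steps directly. Your additional care about the extended-arithmetic cases $a,b\in\{0,\infty\}$ is a useful refinement not made explicit in the paper.
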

The proof of Lemma \ref{lem::L_equals_one_supermartingale} is given in the Appendix.

\begin{theorem}\label{thm::one_step_preservation_2}
Let $W$ and $V$ be B-DMCs such that for a given $N = 2^n$ with $n=0, 1, 2,\ldots$ and a given $i=1, \ldots, N$ the following conditions hold:
\begin{enumerate}
 \item[A)] $\Prob_{V}\left[L_{V_{N}^{(i)}}(y_{1}^{N}) \leq 1\right] \geq  \Prob_{V}\left[L_{V_{N}^{(i)}}(y_{1}^{N}) \geq 1\right]$,
 \item[B)] $\Prob_{W}\left[L_{V_{N}^{(i)}}(y_{1}^{N}) \geq 1\right] \leq  \Prob_{V}\left[L_{V_{N}^{(i)}}(y_{1}^{N}) \geq 1\right]$,
 \item[C)] $\Prob_{W}\left[L_{V_{N}^{(i)}}(y_{1}^{N}) \leq 1\right] \geq  \Prob_{V}\left[L_{V_{N}^{(i)}}(y_{1}^{N}) \leq 1\right]$.
\end{enumerate}
Then, the basic polarization transformations preserve the above three conditions in the sense that, at the next level, they hold for the $2i$-th and $2i-1$-th indices.  
\end{theorem}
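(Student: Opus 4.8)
The plan is to reduce everything to the joint law of two independent copies of the level-$N$ likelihood ratio and then treat the two branches of Proposition~\ref{prop::L_sym_func} separately. By Proposition~\ref{prop::Pe_i_sym} it suffices to work under the all-zero input, so the objects of interest are the laws of $L_{V_N^{(i)}}(y_1^N)$ under $W(\cdot\mid 0_1^N)$ and under $V(\cdot\mid 0_1^N)$. When two blocks are combined the halves $y_1^N$ and $y_{N+1}^{2N}$ are independent and identically distributed, so writing $a=L_{V_N^{(i)}}(y_1^{N})$ and $b=L_{V_N^{(i)}}(y_{N+1}^{2N})$, Proposition~\ref{prop::L_sym_func} gives $L_{V_{2N}^{(2i-1)}}=(a+b)/(1+ab)$ and $L_{V_{2N}^{(2i)}}=ab$ with $(a,b)$ an i.i.d.\ pair under each of $W$ and $V$. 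First I would express every next-level probability in terms of the three level-$N$ threshold masses, which for brevity I denote $p^{P}_{>}=\Prob_{P}[L_{V_N^{(i)}}>1]$, $p^{P}_{=}=\Prob_{P}[L_{V_N^{(i)}}=1]$ and $p^{P}_{<}=\Prob_{P}[L_{V_N^{(i)}}<1]$ for $P\in\{W,V\}$, and then check that A)--C) force the same three inequalities at indices $2i-1$ and $2i$.

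For the minus branch the computation is purely combinatorial: since $a,b\ge 0$ one has $(a+b)/(1+ab)-1=-(a-1)(b-1)/(1+ab)$, so the sign of $L_{V_{2N}^{(2i-1)}}-1$ is the opposite of the sign of $(a-1)(b-1)$ and the three threshold events depend only on the signs of $a-1$ and $b-1$. I would record the closed forms $\Prob_{P}[L_{V_{2N}^{(2i-1)}}<1]=(p^{P}_{>})^2+(p^{P}_{<})^2$, $\Prob_{P}[L_{V_{2N}^{(2i-1)}}>1]=2p^{P}_{>}p^{P}_{<}$ and $\Prob_{P}[L_{V_{2N}^{(2i-1)}}=1]=1-(1-p^{P}_{=})^2$. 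Substituting, condition A) at index $2i-1$ collapses to the perfect square $(p^{V}_{>}-p^{V}_{<})^2\ge 0$, while B) and C) become comparisons across $P=W,V$ of the symmetric functions $(p^{P}_{>})^2+(p^{P}_{<})^2$ and $p^{P}_{>}p^{P}_{<}$, which I would try to deduce from the level-$N$ hypotheses rewritten as $p^{W}_{<}\ge p^{V}_{<}$ and $p^{W}_{>}\le p^{V}_{>}$ together with A).

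For the plus branch $L_{V_{2N}^{(2i)}}=ab$ and the events $\{ab\gtrless 1\}$ are no longer functions of the threshold masses alone, so here I would exploit the symmetry of $V$. Proposition~\ref{prop::Pe_i_sym} and \cite[Corollary~1]{1669570} make $V_N^{(i)}$ symmetric, which gives the reciprocal-weighting identity $\Prob_{V}[L_{V_N^{(i)}}=1/\ell]=\ell\,\Prob_{V}[L_{V_N^{(i)}}=\ell]$. Applying the involution $(a,b)\mapsto(1/a,1/b)$, which carries $\{ab<1\}$ onto $\{ab>1\}$ and scales the joint $V$-mass by the factor $ab$, yields $\Prob_{V}[ab<1]\ge\Prob_{V}[ab>1]$ at once, establishing A). For B) and C) I would use that $ab$ is nondecreasing in each argument, so that $\{ab\ge1\}$ and $\{ab\le1\}$ are monotone events; the strategy is to upgrade the threshold ordering of $L_{V_N^{(i)}}$ under $W$ versus $V$ (the $\prec_{SD}$ relation of the preliminaries) to a monotone coupling of the two i.i.d.\ pairs and transfer the inequalities through the product.

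The hard part will be the exact accounting of the equality event $\{L_{V_N^{(i)}}=1\}$. In the minus branch this mass enters through the quadratic terms above, and it is precisely the relationship between $p^{W}_{=}$ and $p^{V}_{=}$ that decides whether the product comparison required by C) survives the squaring; in the plus branch the boundary $\{ab=1\}$ is filled by genuinely reciprocal pairs and not only by $a=1$ or $b=1$, so it cannot be folded into the $\{>1\}$ and $\{<1\}$ terms and must be carried separately. This is exactly where Lemma~\ref{lem::L_equals_one_supermartingale} is meant to enter: the submartingale property of $\Prob_{W}[L_{V_N^{(i)}}=1]$ keeps the equality mass consistent along the recursion and its almost-sure convergence to $\{0,1\}$ lets the boundary contributions be absorbed. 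I expect the crux to be showing that the threshold hypotheses A)--C), rather than a full pointwise stochastic dominance, are by themselves enough to close the $\{>1\}$ and $\{<1\}$ comparisons simultaneously once the equality mass has been correctly tracked.
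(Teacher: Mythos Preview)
Two steps in your plan would not go through. First, Lemma~\ref{lem::L_equals_one_supermartingale} has no place in this proof: it is an asymptotic statement about the process $\Prob_W[L_{V_N^{(i)}}=1]$ along the polarization tree and contributes nothing to a single recursion step; the paper invokes it only later, inside the proof of Theorem~\ref{thm::thm2}, once Theorem~\ref{thm::one_step_preservation_2} is already in hand. Second, your plus-branch plan of lifting the threshold ordering to a monotone coupling of the real-valued pairs $(a,b)$ and pushing through the product cannot work: hypotheses B) and C) order only the binary indicators $\mathbf{1}\{L\ge1\}$ and $\mathbf{1}\{L\le1\}$, while the event $\{ab\ge1\}$ is not a function of $(\mathbf{1}\{a\ge1\},\mathbf{1}\{b\ge1\})$ --- compare $(a,b)=(0.5,3)$ with $(a,b)=(0.9,1.05)$ --- so monotonicity of $(a,b)\mapsto ab$ is useless without full stochastic dominance of $L$, which the hypotheses do not provide.

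The paper's route is different. Condition A) is handled on its own via Proposition~\ref{prop::LR_mass_preservation}; your minus-branch square computation recovers that part, and for the plus part the reciprocal-weighting identity you cite is indeed what the paper uses. For B) and C) the essential device, which your proposal misses, is Proposition~\ref{prop::P_diff_av}: one writes
\[
\Prob_W\bigl[L_{V_{2N}^{(j)}}\ge1\bigr]-\Prob_V\bigl[L_{V_{2N}^{(j)}}\ge1\bigr]
=\sum_{y_1^N}\bigl[W-V\bigr]\,g(y_1^N),
\quad
g(y_1^N)=\sum_{y_{N+1}^{2N}}\bigl[W+V\bigr]\,\mathbf{1}\{L_{V_{2N}^{(j)}}\ge1\},
\]
the $W{+}V$ symmetrization on the second half being what makes the cross terms cancel. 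One then argues (Proposition~\ref{prop::1_increasing_average}) that $g$ is nondecreasing as a function of the indicator $\mathbf{1}\{L_{V_N^{(i)}}(y_1^N)\ge1\}$ --- for the minus branch this step requires first deriving $\Prob_W[L\le1]\ge\Prob_W[L\ge1]$ from A)+B)+C) --- and concludes via the binary stochastic-dominance equivalence of Proposition~\ref{prop::SD_preservation_binary}. So preservation of B) and C) is argued through indicator-level monotonicity after the $W{+}V$ averaging, not through a coupling of the likelihood ratios themselves; the bare comparison of $(p^W_>)^2+(p^W_<)^2$ with $(p^V_>)^2+(p^V_<)^2$ that your minus-branch outline leads to does not close from $p^W_<\ge p^V_<$ and $p^W_>\le p^V_>$ alone.
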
 

An entire section will be devoted to the proof of Theorem \ref{thm::one_step_preservation_2} 
after we prove Theorem \ref{thm::thm2}.
\begin{proof}[Proof of Theorem \ref{thm::thm2}]
Assume the conditions (i), (ii), and (iii) hold. Then by Theorem \ref{thm::one_step_preservation_2}, 
the conditions are preserved for the synthetic channels created by the polar transformations. 
Hence, for $\forall i= 1, \ldots, N$, we get
\begin{equation}
 \Prob_{W}\left[L_{V_{N}^{(i)}}(y_1^N) \geq 1\right] \leq \Prob_{V}\left[L_{V_{N}^{(i)}}(y_1^N) \geq 1\right].
\end{equation}
Knowing the bounds $Pe_{N}^{(i)}(W, V) \leq \Prob_{W}\left[L_{V_{N}^{(i)}}(y_1^N) \geq 1\right]$ 
(as we assumed $W$ and $V$ are symmetrized under the same permutation) and $\Prob_{V}\left[L_{V_{N}^{(i)}}(y_1^N) \geq 1\right] \leq Z(V_{N}^{(i)})$ apply,
the relation $Pe_{N}^{(i)}(W, V) \leq Z(V_{N}^{(i)})$ is proved.

On the other hand, Proposition \ref{lem::L_equals_one_supermartingale} shows that once channels are sufficiently polarized, either $\Prob_{W}\left[L_{V_N^{(i)}}(y_1^{2N}) = 1\right] \approx 1$ or 
$\Prob_{W}\left[L_{V_N^{(i)}}(y_1^{2N}) = 1\right] \approx 0$. Moreover, one can easily find that the first case lead to a completely noisy channel, 
and only the second case can lead to a perfect channel under a possibly mismatched decoding. As the inequalities
\begin{equation}
 Pe_{N}^{(i)}(W, V) \leq \Prob_{W}\left[L_{V_N^{(i)}}(y_1^{N}) \geq 1\right] \leq \Prob_{V}\left[L_{V_N^{(i)}}(y_1^{N}) \geq 1\right]
\end{equation}
hold, it turns out that, for those indices $i\in\mathcal{A}$ which correspond to the good channels' picked by the polar code designed for the channel $V$ so that $\Prob_{V}\left[L_{V_N^{(i)}}(y_1^{N}) = 1\right] \approx 0$, we have  
\begin{equation}
Pe_{N}^{(i)}(W, V) \leq Pe_{N}^{(i)}(V), \quad \forall i\in\mathcal{A}
\end{equation}
as claimed. This completes the proof of the theorem.
\end{proof}

\subsection{Proof of Theorem \ref{thm::one_step_preservation_2}}
We first introduce a set of propositions needed in the proof.   

\begin{proposition}\label{prop::LR_mass_preservation}
 For a symmetric B-DMC channel $V$ such that the condition
\begin{equation}
 \Prob_{V}\left[L_{V_{N}^{(i)}}\left(y_{1}^{N}\right) < 1\right] \geq \Prob_{V}\left[L_{V_{N}^{(i)}}\left(y_{1}^{N}\right) > 1\right]
\end{equation}
holds for a given $N=2^n$ with $n=0,1, \ldots$ and for a given $i=1, \ldots, N$, the basic polarization transformations preserve the inequality, i.e. for $j = 2i-1, 2i$, we have
\begin{equation}
 \Prob_{V}\left[L_{V_{2N}^{(j)}}\left(y_{1}^{2N}\right) < 1\right] \geq \Prob_{V}\left[L_{V_{2N}^{(j)}}\left(y_{1}^{2N}\right) > 1\right]. 
\end{equation} 
\end{proposition}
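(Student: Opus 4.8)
The plan is to reduce the statement to a fact about two independent likelihood ratios and then treat the two branches of the recursion in Proposition~\ref{prop::L_sym_func} separately. First I would fix notation. Since $V$ is symmetric, so is $V_N^{(i)}$, and under the all-zero input the two output halves $y_1^N$ and $y_{N+1}^{2N}$ are independent and identically distributed; hence $a := L_{V_N^{(i)}}(y_1^N)$ and $b := L_{V_N^{(i)}}(y_{N+1}^{2N})$ are i.i.d.\ copies of $L_{V_N^{(i)}}$ under $\Prob_V$. Write $p = \Prob_V[a < 1]$, $q = \Prob_V[a > 1]$, $r = \Prob_V[a = 1]$, so that $p+q+r=1$ and the hypothesis reads $p \geq q$. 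I would also record the defining symmetry of the likelihood ratio of a symmetric channel: if $\pi$ is the symmetrizing (involutive) permutation, then $L(\pi y) = 1/L(y)$ and $V(\pi y \mid 0) = L(y)\,V(y \mid 0)$, so that the mass $m(\cdot)$ of the law of $a$ satisfies $m(1/\ell) = \ell\, m(\ell)$. This relation already implies $p \geq q$, so the stated hypothesis is best read as the invariant being carried through the induction of Theorem~\ref{thm::one_step_preservation_2}.

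For the minus branch $j = 2i-1$, I would substitute $a,b$ into the first recursion and compute the sign of the shifted ratio,
\begin{equation}
 L_{V_{2N}^{(2i-1)}} - 1 = \frac{a+b}{1+ab} - 1 = -\frac{(1-a)(1-b)}{1+ab},
\end{equation}
which gives $L_{V_{2N}^{(2i-1)}} < 1 \Leftrightarrow (1-a)(1-b) > 0$, $L_{V_{2N}^{(2i-1)}} > 1 \Leftrightarrow (1-a)(1-b) < 0$, and equality iff $a=1$ or $b=1$. By independence this yields $\Prob_V[L_{V_{2N}^{(2i-1)}} < 1] = p^2 + q^2$ and $\Prob_V[L_{V_{2N}^{(2i-1)}} > 1] = 2pq$, so the desired inequality is just $(p-q)^2 \geq 0$; note that this branch holds unconditionally.

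The plus branch $j = 2i$ is the crux and the only place where genuine structure is required: here $L_{V_{2N}^{(2i)}} = ab$, and $p \geq q$ alone does \emph{not} suffice, since a median-type condition is not preserved under products of independent variables. The idea is to apply the reciprocal involution $(a,b) \mapsto (1/a, 1/b)$, which maps the event $\{ab > 1\}$ onto $\{ab < 1\}$, and then reweight using $m(1/\ell) = \ell\, m(\ell)$. This produces the identity
\begin{equation}
 \Prob_V[ab > 1] = \Expt_V\!\big[ab\,\1{ab < 1}\big] \leq \Prob_V[ab < 1],
\end{equation}
the final step holding because $0 \leq ab < 1$ on the event in question. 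The boundary atoms are handled by the same symmetry, which forces zero mass at $L = \infty$ under $\Prob_V$.

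I expect the plus branch to be the main obstacle, specifically converting the reciprocal-involution heuristic into the clean reweighting identity above while correctly accounting for the ties ($a=1$, $b=1$, and $ab=1$) and the boundary atoms at $0$ and $\infty$; by contrast the minus branch and the setup are essentially bookkeeping.
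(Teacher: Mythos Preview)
Your proposal is correct. The minus branch ($j=2i-1$) matches the paper's argument exactly: both compute the sign of $(1-a)(1-b)$ and reduce to $(p-q)^2 \geq 0$. For the plus branch ($j=2i$), both you and the paper invoke the same channel-symmetry identity $m(1/\ell)=\ell\,m(\ell)$, but you apply it far more directly. The paper carries out a lengthy combinatorial expansion of $\Prob_V[L_1L_2 \lneq 1]$ and $\Prob_V[L_1L_2 \gneq 1]$ into sums involving $\max\{\ell_1,\ell_2\}$ and $\min\{\ell_1,\ell_2\}$, then compares the two closed forms; your substitution $(a,b)\mapsto(1/a,1/b)$ collapses all of this into the single identity $\Prob_V[ab>1]=\Expt_V[\,ab\,\1{ab<1}\,]\leq\Prob_V[ab<1]$. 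Your route is shorter and makes the role of the symmetry involution transparent; the paper's computation has the minor advantage of yielding explicit expressions for both probabilities, which are not actually needed for the proposition. Your side remark that the hypothesis $p\geq q$ is automatic for symmetric $V$ (by the same identity) is also correct and worth stating.
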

The proof of Proposition \ref{prop::LR_mass_preservation} is given in the Appendix.
\begin{proposition}\label{prop::P_diff_av}
 For B-DMCs $W$ and $V$, we have 
\begin{multline}\label{eq::P_diff_av}
  \Prob_{W}\left[L_{V_{2N}^{(i)}}(y_{1}^{2N}) \geq 1\right] - \Prob_{V}\left[L_{V_{2N}^{(i)}}(y_{1}^{2N}) \geq 1\right] \\
  = \displaystyle\sum_{y_{1}^{N}} \left[W(y_{1}^{N}|0_{1}^{N}) - V(y_{1}^{N}|0_{1}^{N}) \right] \times \\
\displaystyle\sum_{y_{N+1}^{2N}} \left[W(y_{N+1}^{2N}|0_{1}^{N}) + V(y_{N+1}^{2N}|0_{1}^{N}) \right] \mathbf{1}\{L_{V_{2N}^{(i)}}(y_{1}^{2N}) \geq 1\}.
\end{multline}
% A similar expression can be used to compute the difference 
% \begin{equation}
% \Prob_{W}\left[L_{V_{2N}^{(i)}}(y_{1}^{2N}) \leq 1\right] - \Prob_{V}\left[L_{V_{2N}^{(i)}}(y_{1}^{2N}) \leq 1\right].
% \end{equation}
\end{proposition}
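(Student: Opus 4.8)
The plan is to begin from the definition of the two probabilities as expectations of the indicator $\mathbf{1}\{L_{V_{2N}^{(i)}}(y_{1}^{2N}) \geq 1\}$ under the all-zero output laws of $W$ and $V$, and then to factor each output law across the two length-$N$ halves. Because the channels are symmetrized under the same permutation and are memoryless, the all-zero information vector maps to the all-zero codeword into both constituent copies, so the combined output law splits as $W(y_{1}^{2N}|0_{1}^{2N}) = W(y_{1}^{N}|0_{1}^{N})\,W(y_{N+1}^{2N}|0_{1}^{N})$, and identically for $V$. Substituting these factorizations, the difference $\Prob_{W}[L_{V_{2N}^{(i)}}(y_{1}^{2N}) \geq 1] - \Prob_{V}[L_{V_{2N}^{(i)}}(y_{1}^{2N}) \geq 1]$ becomes a single sum over $y_{1}^{2N}$ in which the coefficient of the indicator is $W(y_{1}^{N}|0_{1}^{N})W(y_{N+1}^{2N}|0_{1}^{N}) - V(y_{1}^{N}|0_{1}^{N})V(y_{N+1}^{2N}|0_{1}^{N})$.

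The next step is the crux. Abbreviating $a = W(y_{1}^{N}|0_{1}^{N})$, $b = W(y_{N+1}^{2N}|0_{1}^{N})$, $c = V(y_{1}^{N}|0_{1}^{N})$, and $d = V(y_{N+1}^{2N}|0_{1}^{N})$, the coefficient just obtained is $ab - cd$, whereas the target right-hand side of \eqref{eq::P_diff_av} carries the coefficient $(a-c)(b+d) = ab + ad - cb - cd$. These differ by the cross term $ad - cb$, so a naive substitution falls short of the claim. To close the gap I would invoke Proposition \ref{prop::L_sym_func}: since $L_{V_{2N}^{(i)}}(y_{1}^{2N})$ is a symmetric function of the two half-arguments $L_{V_{N}^{(j)}}(y_{1}^{N})$ and $L_{V_{N}^{(j)}}(y_{N+1}^{2N})$ (writing $i = 2j-1$ or $i = 2j$), the indicator $\mathbf{1}\{L_{V_{2N}^{(i)}}(y_{1}^{2N}) \geq 1\}$ is invariant under exchanging the two output halves $y_{1}^{N} \leftrightarrow y_{N+1}^{2N}$.

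Exploiting this invariance, I would relabel the dummy summation indices in the term carrying the coefficient $ad$: swapping $y_{1}^{N} \leftrightarrow y_{N+1}^{2N}$ and using the symmetry of the indicator converts $\sum W(y_{1}^{N}|0_{1}^{N})V(y_{N+1}^{2N}|0_{1}^{N})\,\mathbf{1}\{\cdot\}$ into $\sum V(y_{1}^{N}|0_{1}^{N})W(y_{N+1}^{2N}|0_{1}^{N})\,\mathbf{1}\{\cdot\}$, which is exactly the term carrying $cb$. Hence the two cross terms contribute equally and $\sum (ad - cb)\,\mathbf{1}\{\cdot\} = 0$. Adding this vanishing quantity allows the algebraic completion $ab - cd = (a-c)(b+d) - (ad - cb)$, and regrouping the double sum as an outer sum over $y_{1}^{N}$ weighted by $W(y_{1}^{N}|0_{1}^{N}) - V(y_{1}^{N}|0_{1}^{N})$ and an inner sum over $y_{N+1}^{2N}$ weighted by $W(y_{N+1}^{2N}|0_{1}^{N}) + V(y_{N+1}^{2N}|0_{1}^{N})$ yields precisely \eqref{eq::P_diff_av}. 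The main obstacle is recognizing that the direct factorization is off by the cross term and that the symmetry of the synthesized likelihood ratio is exactly what forces that cross term to cancel under the sum; the remainder is bookkeeping.
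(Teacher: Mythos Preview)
Your proposal is correct and mirrors the paper's own argument: the paper simply expands the right-hand side of \eqref{eq::P_diff_av} and uses the symmetry of the likelihood-ratio recursion (Proposition~\ref{prop::L_sym_func}) to cancel the two cross terms, while you start from the left-hand side and add the vanishing cross-term difference to reach the right-hand side. The key step---that $\sum (ad-cb)\,\mathbf{1}\{\cdot\}=0$ by the exchange symmetry $y_1^N\leftrightarrow y_{N+1}^{2N}$---is identical in both; one minor remark is that the factorization $W(y_1^{2N}\mid 0_1^{2N})=W(y_1^{N}\mid 0_1^{N})W(y_{N+1}^{2N}\mid 0_1^{N})$ needs only memorylessness, not that $W$ and $V$ be symmetrized under the same permutation (the proposition as stated does not assume this).
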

\begin{proof}[Proof of Proposition \ref{prop::P_diff_av}]
We develop the right hand side of Equation \eqref{eq::P_diff_av}
\begin{multline}
\displaystyle\sum_{y_{1}^{2N}} \left[W(y_{1}^{2N}|0_{1}^{2N}) - V(y_{1}^{2N}|0_{1}^{2N})\right] \mathbf{1}\{L_{V_{2N}^{(i)}}(y_{1}^{2N}) \geq 1\} \\
+\displaystyle\sum_{y_{1}^{2N}} W(y_{1}^{N}|0_{1}^{N})V(y_{N+1}^{2N}|0_{1}^{N}) \times \hspace{30mm}\\  \hspace{20mm}\mathbf{1}\{f(L_{V_{N}^{(i)}}(y_{1}^{N}), L_{V_{N}^{(i)}}(y_{N+1}^{2N}))  \geq 1\} \\
-\displaystyle\sum_{y_{1}^{2N}} W(y_{N+1}^{2N}|0_{1}^{N})V(y_{1}^{N}|0_{1}^{N}) \times \hspace{30mm}\\  \hspace{20mm}\mathbf{1}\{f(L_{V_{N}^{(i)}}(y_{N+1}^{2N}),L_{V_{N}^{(i)}}(y_{1}^{N}))  \geq 1\} \\
=\Prob_{W}\left[L_{V_{2N}^{(i)}}(y_{1}^{2N}) \geq 1\right] - \Prob_{V}\left[L_{V_{2N}^{(i)}}(y_{1}^{2N}) \geq 1\right],
\end{multline}
where we used the symmetry of the LR functions described in Proposition \ref{prop::L_sym_func}. 
\end{proof}

\begin{proposition}\label{prop::SD_preservation_binary}
For any B-DMCs $W$ and $V$, we have 
\begin{multline}
\mathbf{1}\{L_V(y) \geq 1\}^{W} \prec_{SD} \mathbf{1}\{L_V(y) \geq 1\}^{V} \quad \\
\hbox{iff} \quad \Prob_{W}\left[L_V(y) \geq 1\right] \leq \Prob_{V}\left[L_V(y) \geq 1\right],
\end{multline}
\begin{multline}
\mathbf{1}\{L_V(y) \leq 1\}^{W} \succ_{SD} \mathbf{1}\{L_V(y) \leq 1\}^{V} \quad \\
\hbox{iff} \quad \Prob_{W}\left[L_V(y) \leq 1\right] \geq \Prob_{V}\left[L_V(y) \leq 1\right].
\end{multline} 
\end{proposition}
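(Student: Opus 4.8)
The plan is to exploit the fact that $\mathbf{1}\{L_V(y) \geq 1\}$ is a Bernoulli random variable, taking only the two values $0$ and $1$. For such a two-valued variable the entire law is pinned down by the single scalar $\Prob[\mathbf{1}\{L_V(y)\geq 1\} = 1] = \Prob[L_V(y)\geq 1]$, so stochastic dominance can only depend on a comparison of these masses. Writing $p_W = \Prob_W[L_V(y)\geq 1]$ and $p_V = \Prob_V[L_V(y)\geq 1]$, I would establish that $\prec_{SD}$ is equivalent to $p_W \leq p_V$ by working directly with the characterization of stochastic dominance through non-decreasing test functions already recalled in the Preliminaries, namely that $\mathbf{1}\{L_V(y)\geq 1\}^W \prec_{SD} \mathbf{1}\{L_V(y)\geq 1\}^V$ holds exactly when $\Expt_W[F(\cdot)] \leq \Expt_V[F(\cdot)]$ for every non-decreasing $F$.

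For the first equivalence I would first compute, for an arbitrary function $F$,
\[
\Expt_W\left[F(\mathbf{1}\{L_V(y)\geq 1\})\right] = (1-p_W)F(0) + p_W F(1),
\]
and likewise under $V$, so that
\[
\Expt_V[F(\cdot)] - \Expt_W[F(\cdot)] = (p_V - p_W)(F(1) - F(0)).
\]
The $\Leftarrow$ direction is then immediate: if $p_W \leq p_V$, then for every non-decreasing $F$ both factors on the right are non-negative, whence $\Expt_W[F(\cdot)] \leq \Expt_V[F(\cdot)]$, which is precisely the dominance relation. For the $\Rightarrow$ direction I would specialize to the identity $F(x)=x$, which is non-decreasing; since then $\Expt_W[F(\cdot)] = p_W$ and $\Expt_V[F(\cdot)] = p_V$, the dominance relation forces $p_W \leq p_V$.

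The second equivalence follows from the identical computation applied to the Bernoulli variable $\mathbf{1}\{L_V(y)\leq 1\}$, the only care being the reversed relation $\succ_{SD}$. Unwinding $\succ_{SD}$ as $\mathbf{1}\{L_V(y)\leq 1\}^V \prec_{SD} \mathbf{1}\{L_V(y)\leq 1\}^W$ and writing $q_W = \Prob_W[L_V(y)\leq 1]$, $q_V = \Prob_V[L_V(y)\leq 1]$, one gets $\Expt_W[F(\cdot)] - \Expt_V[F(\cdot)] = (q_W - q_V)(F(1)-F(0))$, so $\succ_{SD}$ is equivalent to $q_W \geq q_V$, as stated. I do not expect any genuine obstacle here: the entire content is the observation that the indicators are two-valued, after which stochastic dominance degenerates to a comparison of means. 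The only points demanding attention are bookkeeping, namely keeping the inequality directions consistent when passing from $\prec_{SD}$ to $\succ_{SD}$, and confirming that the identity map is an admissible (non-decreasing) test function supplying the converse directions.
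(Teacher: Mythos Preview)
Your proposal is correct and follows exactly the same approach as the paper: both arguments rest solely on the observation that the indicator variables are binary-valued, so stochastic dominance reduces to comparing a single probability. Your write-up simply spells out the Bernoulli computation and the choice of the identity test function that the paper leaves implicit in its one-sentence proof.
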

\begin{proof}[Proof of Proposition \ref{prop::SD_preservation_binary}]
The proposition follows by noting the random variables with the indicator functions are binary valued, so for both cases the two conditions are equivalent. 
\end{proof}
\begin{proposition}\label{prop::1_increasing_average}
\small$\Expt_{W}\left[\mathbf{1}\{L_{V_{2N}^{(2i)}}(y_1^{2N}) \geq 1\} | \mathbf{1}\{L_{V_N^{(i)}}(y_1^{N}) \geq 1\}\right]$
$\left(\Expt_{W}\left[\mathbf{1}\{L_{V_{2N}^{(2i)}}(y_1^{2N}) \leq 1\} | \mathbf{1}\{L_{V_N^{(i)}}(y_1^{N}) \leq 1\}\right]\right)$\normalsize \hspace{2mm} function is non-decreasing in 
\small$\mathbf{1}\{L_{V_N^{(i)}}(y_1^{N}) \geq 1\}$ $\left(\mathbf{1}\{L_{V_N^{(i)}}(y_1^{N}) \leq 1\}\right)$\normalsize.
The function \small$\Expt_{W}\left[\mathbf{1}\{L_{V_{2N}^{(2i-1)}}(y_1^{2N}) \geq 1\} | \mathbf{1}\{L_{V_{N}^{(i)}}(y_1^{N}) \geq 1\}\right]$ 
$\left(\Expt_{W}\left[\mathbf{1}\{L_{V_{2N}^{(2i-1)}}(y_1^{2N}) \leq 1\} | \mathbf{1}\{L_{V_N^{(i)}}(y_1^{N}) \leq 1\}\right]\right)$\normalsize \hspace{2mm}however, is
non-decreasing in \small$\mathbf{1}\{L_{V_N^{(i)}}(y_1^{N}) \geq 1\}$ $\left(\mathbf{1}\{L_{V_N^{(i)}}(y_1^{N}) \leq 1\}\right)$\normalsize \hspace{1mm}if the following condition holds:
\begin{equation}\label{eq::W_order}
 \Prob_{W}\left[L_{V_N^{(i)}}(y_1^{N}) \leq 1 \right] \geq \Prob_{W}\left[L_{V_N^{(i)}}(y_1^{N}) \geq 1\right].
\end{equation}
\end{proposition}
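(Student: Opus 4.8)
The plan is to reduce each of the four monotonicity assertions to a one-variable statement. Write $a \triangleq L_{V_N^{(i)}}(y_1^N)$ and $b \triangleq L_{V_N^{(i)}}(y_{N+1}^{2N})$; since $W$ is memoryless and the all-zero input is sent, $a$ and $b$ are i.i.d.\ under $W(\cdot \mid 0_1^{2N})$. By Proposition~\ref{prop::L_sym_func} the plus branch satisfies $L_{V_{2N}^{(2i)}} = ab$ and the minus branch $L_{V_{2N}^{(2i-1)}} = (a+b)/(1+ab)$, for which I would use the elementary equivalence $(a+b)/(1+ab) \geq 1 \iff (1-a)(1-b) \leq 0$ as the workhorse. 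Because the conditioning variable is binary valued, each claim is simply the comparison of the conditional expectation at the value $1$ against its value at $0$.

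For the plus branch I would integrate out $b$ first. For fixed $b$ the map $a \mapsto \mathbf{1}\{ab \geq 1\}$ is non-decreasing (respectively $a \mapsto \mathbf{1}\{ab \leq 1\}$ is non-increasing), so after averaging over the independent copy $b$ the resulting function of $a$ inherits the same monotonicity. The conditional law of $a$ given $\{a \geq 1\}$ stochastically dominates its law given $\{a < 1\}$ (the supports are $[1,\infty)$ and $[0,1)$), and symmetrically for $\{a \leq 1\}$ versus $\{a > 1\}$; combining the monotonicity with this dominance yields both plus-branch claims with no further hypothesis, consistent with the proposition requiring \eqref{eq::W_order} only in the minus case.

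The minus branch is the main obstacle, precisely because $\mathbf{1}\{(1-a)(1-b) \leq 0\}$ is \emph{not} monotone in $a$: its dependence on $a$ flips according to whether $b$ lies below or above $1$, so the clean dominance argument breaks down. Here the plan is to compute the two conditional expectations in closed form. I would introduce the masses $\alpha = \Prob_W[a < 1]$, $\beta = \Prob_W[a = 1]$, $\gamma = \Prob_W[a > 1]$ (the same for $b$), so that \eqref{eq::W_order} reads $\alpha \geq \gamma$. Conditioning on the three cases $a<1$, $a=1$, $a>1$ and applying the equivalence, the inner average over $b$ in the ``$\geq 1$'' case equals $\beta+\gamma$, $1$, and $\alpha+\beta$ respectively; assembling these gives $\Expt_W[\cdot \mid \mathbf{1}\{a\geq 1\}=0] = \beta+\gamma$ against a weighted combination at the conditioning value $1$. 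The required inequality then simplifies to $\alpha(\beta+\gamma) \geq \gamma^2$, which is immediate from $\alpha \geq \gamma$; the ``$\leq 1$'' case is handled symmetrically and reduces to $\alpha+\beta \geq \gamma$, again a consequence of \eqref{eq::W_order}.

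The one genuinely delicate point in executing this is the bookkeeping of the atom at $L = 1$ (the mass $\beta$), which belongs to both events $\{L \geq 1\}$ and $\{L \leq 1\}$ and must be attributed consistently on each side of the recursion; it is exactly this atom that prevents the naive symmetry between the two branches and makes the explicit three-case computation, rather than a pure stochastic-dominance argument, necessary for the minus transform.
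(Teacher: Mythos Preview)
Your proposal is correct and follows essentially the same route as the paper: both reduce the binary-conditioning monotonicity to comparing the conditional expectation at $0$ versus at $1$, observe that the plus branch is immediate from the monotonicity of $a\mapsto\mathbf{1}\{ab\geq 1\}$, and handle the minus branch by a direct three-case split on the sign of $(1-a)$. The only cosmetic difference is that the paper, instead of computing the exact value at the conditioning point $1$ and simplifying to $\alpha(\beta+\gamma)\geq\gamma^2$, simply lower-bounds it by $\Prob_W[L_2\leq 1]$ (each sub-case on $\{a\geq 1\}$ contributes at least that much) and then invokes \eqref{eq::W_order} directly; your fuller computation of course recovers this.
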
    
\begin{proof}[Proof of Proposition \ref{prop::1_increasing_average}]
The claims for the plus operations are trivial. For the minus operation, the claims follow by noting that
\begin{multline}
 \Expt\left[\mathbf{1}\{L_{V_{2N}^{(2i-1)}}(y_1^{2N}) \geq 1\} | \mathbf{1}\{L_{V_N^{(i)}}(y_1^{N}) \geq 1\} = 0\right] \\
=  \Expt\left[\mathbf{1}\{L_{V_{2N}^{(2i-1)}}(y_1^{2N}) \leq 1\} | \mathbf{1}\{L_{V_N^{(i)}}(y_1^{N}) \leq 1\} = 0\right] \\
= \Prob_{W}\left[L_{V_N^{(i)}}(y_{N+1}^{2N}) \geq 1\} \right], 
\end{multline} 
and both
\begin{multline}
 \Expt\left[\mathbf{1}\{L_{V_{2N}^{(2i-1)}}(y_1^{2N}) \geq 1\} | \mathbf{1}\{L_{V_N^{(i)}}(y_1^{N}) \geq 1\} = 1\right] \\
 \geq \Prob_{W}\left[L_{V_N^{(i)}}(y_{N+1}^{2N}) \leq 1\} \right], 
 \end{multline}
 \begin{multline}
 \Expt\left[\mathbf{1}\{L_{V_{2N}^{(2i-1)}}(y_1^{2N}) \leq 1\} | \mathbf{1}\{L_{V_N^{(i)}}(y_1^{N}) \leq 1\} = 1\right] \\
\geq \Prob_{W}\left[L_{V_N^{(i)}}(y_{N+1}^{2N}) \leq 1\} \right]. 
\end{multline}
So by symmetry of $y_{1}^N$ and $y_{N+1}^{2N}$ in the construction, the condition in \eqref{eq::W_order} is sufficient to prove the monotonicity claims.
\end{proof}
\begin{proof}[Proof of Theorem \ref{thm::one_step_preservation_2}]
\begin{enumerate}
 \item[A$^{\pm}$)] We know condition A is preserved by Proposition \ref{prop::LR_mass_preservation}. 	
 \item[B$^{\pm}$)] Using Proposition \ref{prop::P_diff_av} we get
\begin{multline}
\Prob_{W}\left[L_{V_{2N}^{(i)}}(y_{1}^{2N}) \geq 1\right] - \Prob_{V}\left[L_{V_{2N}^{(i)}}(y_{1}^{2N}) \geq 1\right] \\
%= &\displaystyle\sum_{y_{1}^{N}} \left[W(y_{1}^{N}|0_{1}^{N}) - V(y_{1}^{N}|0_{1}^{N}) \right] \times \\
% &\displaystyle\sum_{y_{N+1}^{2N}} \left[W(y_{N+1}^{2N}|0_{1}^{N}) + V(y_{N+1}^{2N}|0_{1}^{N}) \right] \mathbf{1}\{L_{V_{2N}^{(i)}}(y_{1}^{2N}) \geq 1\} \\
= \displaystyle\sum_{y_{1}^{N}} \left[W(y_{1}^{N}|0_{1}^{N}) - V(y_{1}^{N}|0_{1}^{N}) \right] \times\\
\Expt_{W+V}\left[\mathbf{1}\{L_{V_{2N}^{(i)}}(y_1^{2N}) \geq 1\} | \mathbf{1}\{L_{V_{N}^{(i)}}(y_{1}^{N}) \geq 1\}\right],
\end{multline}
where we have defined  \small
\begin{multline}\label{eq::E_W+V}
 \Expt_{W+V}\left[\mathbf{1}\{L_{V_{2N}^{(i)}}(y_1^{2N}) \geq 1\} | \mathbf{1}\{L_{V_N^{(i)}}(y_1^{N}) \geq 1\}\right] = \\
\displaystyle\sum_{y_{N+1}^{2N}} \left[W(y_{N+1}^{2N}|0_{1}^{N}) + V(y_{N+1}^{2N}|0_{1}^{N}) \right] \mathbf{1}\{L_{V_{2N}^{(i)}}(y_{1}^{2N}) \geq 1\}.
\end{multline}\normalsize
Moreover by Proposition \ref{prop::SD_preservation_binary}, condition B implies that $\mathbf{1}\{L_{V_{N}^{(i)}}(y_{1}^{N}) \geq 1\}^{W} \prec_{SD} \mathbf{1}\{L_{V_{N}^{(i)}}(y_{1}^{N}) \geq 1\}^{V}$. 
So, we will be done if we show that the random variables defined in \eqref{eq::E_W+V} obtained after applying the polar 
transformations are both non-decreasing transformations in $\mathbf{1}\{L_{V_N^{(i)}}(y_1^{N}) \geq 1\}$. We consider the cases the expectations are taken under $W$ and $V$ separately.
For 
\begin{equation*}
\Expt_{V}\left[\mathbf{1}\{L_{V_{2N}^{(i)}}(y_1^{2N}) \geq 1\} | \mathbf{1}\{L_{V_N^{(i)}}(y_1^{N}) \geq 1\}\right],
\end{equation*}
we know by taking $W = V$ in Proposition \ref{prop::1_increasing_average} and by condition
A that this claim holds. For 
\begin{equation*}
\Expt_{W}\left[\mathbf{1}\{L_{V_{2N}^{(i)}}(y_1^{2N})\geq 1\} | \mathbf{1}\{L_{V_N^{(i)}}(y_1^{N})\geq 1\}\right], 
\end{equation*}
we know once again by Proposition \ref{prop::1_increasing_average} that this is always true for the plus transformation and is also true for the minus transformation if we have
\begin{equation}\label{eq::W_order_2}
 \Prob_{W}\left[L_{V_N^{(i)}}(y_{1}^N) \leq 1\right] \geq  \Prob_{W}\left[L_{V_N^{(i)}}(y_{1}^N) \geq 1\right].
\end{equation}
Now we show that \eqref{eq::W_order_2} holds. Taking the difference of the inequalities stated in conditions B and C, we get
\begin{multline}
\Prob_{W}\left[L_{V_N^{(i)}}(y_{1}^N) \leq 1\right] -  \Prob_{W}\left[L_{V_N^{(i)}}(y_{1}^N) \geq 1\right] \\
\geq \Prob_{V}\left[L_{V_N^{(i)}}(y_{1}^N) \leq 1\right] -  \Prob_{V}\left[L_{V_N^{(i)}}(y_{1}^N) \geq 1\right] \geq 0,
\end{multline}
where the non-negativity follows by condition A. 
\item[C$^{\pm}$)] The proof can be carried following similar steps as in part B$^{\pm}$ showing that the transformations defined by
$\Expt_{W+V}\left[\mathbf{1}\{L_{V_{2N}^{(i)}}(y_1^{2N}) \leq 1\} | \mathbf{1}\{L_{V_{N}^{(i)}}(y_{1}^{N}) \leq 1\}\right]$ are also non-decreasing in $\mathbf{1}\{L_{V_N^{(i)}}(y_1^{N}) \leq 1\}$
using Proposition \ref{prop::1_increasing_average}, condition A, and Equation \eqref{eq::W_order_2}.
\end{enumerate}
\end{proof}

It is useful to remark that for those B-DMCs $W$ and $V$ such that no output has a LR which equals to one, the assumptions (ii) and (iii) of Theorem \ref{thm::thm2} can be merged into a single initial condition as $Pe(W, V) \leq Pe(V)$.
Following this remark, we now study in Theorem \ref{thm::one_step_preservation}, the one step preservation properties related to the channel parameter $Pe_{N}^{(i)}$.

\begin{theorem}\label{thm::one_step_preservation}
Let $W$ and $V$ be B-DMCs symmetrized under the same permutation
such that for a given $N = 2^n$ with $n=0, 1, 2,\ldots$ and a given $i=1, \ldots, N$ the following conditions hold:
\begin{enumerate}
 \item[A)] $\Prob_{V}\left[L_{V_{N}^{(i)}}(y_{1}^{N}) < 1\right] \geq  \Prob_{V}\left[L_{V_{N}^{(i)}}(y_{1}^{N}) > 1\right]$,
 \item[B)] $Pe_{N}^{(i)}(W,V) - Pe_{N}^{(i)}(V) \leq 0$.
\end{enumerate}
Then, the minus polar transformation preserves these conditions.
On the other hand, while the plus transformation preserves condition $A$, condition $B$ may not be preserved in general. 
\end{theorem}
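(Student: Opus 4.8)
The plan is to collapse both hypotheses onto a single scalar. For a symmetric pair $(W,V)$ symmetrized under the same permutation, set $m = \Prob[L_{V_N^{(i)}} > 1] - \Prob[L_{V_N^{(i)}} < 1]$, where the likelihood ratio is always that of $V$ and only the averaging measure changes: evaluating under $W(\cdot|0_1^N)$ or $V(\cdot|0_1^N)$ gives $m_W$ or $m_V$. Since $\Prob[L>1]+\Prob[L=1]+\Prob[L<1]=1$, Proposition \ref{prop::Pe_i_sym} yields $Pe_N^{(i)}(W,V)=\tfrac12(1+m_W)$ and $Pe_N^{(i)}(V)=\tfrac12(1+m_V)$. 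Thus condition B is exactly $m_W\le m_V$ and condition A is exactly $m_V\le 0$. Condition A is preserved by both transforms via Proposition \ref{prop::LR_mass_preservation}, so all the content lies in tracking $m$ under the plus/minus recursion.

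For the minus transform I would use that, under the all-zero input, the two half-block ratios $a=L_{V_N^{(i)}}(y_1^N)$ and $b=L_{V_N^{(i)}}(y_{N+1}^{2N})$ are i.i.d.\ with the relevant marginal. By Proposition \ref{prop::L_sym_func}, $L^-:=L_{V_{2N}^{(2i-1)}}(y_1^{2N})=(a+b)/(1+ab)$, and as $1+ab>0$ its position relative to $1$ is ruled by the sign of $(1-a)(1-b)$, so that $\mathbf{1}\{L^->1\}-\mathbf{1}\{L^-<1\}=-\sgn(1-a)\sgn(1-b)$. Taking expectations and using independence yields $m^-=-\big(\Expt[\sgn(1-a)]\big)^2=-m^2$, since $\Expt[\sgn(1-a)]=\Prob[a<1]-\Prob[a>1]=-m$. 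Hence conditions A and B, i.e.\ $m_W\le m_V\le 0$, force $m_W^2\ge m_V^2$ and therefore $m_W^-=-m_W^2\le -m_V^2=m_V^-$, which is condition B at index $2i-1$; condition A there is again Proposition \ref{prop::LR_mass_preservation}. This disposes of the minus case.

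For the plus transform the reduction breaks down: $L^+:=L_{V_{2N}^{(2i)}}(y_1^{2N})=ab$, and whether $ab$ exceeds $1$ is not determined by the trichotomies of $a$ and $b$, so $m^+$ depends on the whole law of $L_{V_N^{(i)}}$ rather than on $m$ alone. I would expose this with a three-valued model: if $L_{V_N^{(i)}}$ takes the values $\{s,1,1/s\}$ with masses $(\alpha,\gamma,\beta)$, then multiplying two independent copies gives, after a short computation, $m^+=(\alpha-\beta)(1+\gamma)=m(1+\gamma)$ with $\gamma=\Prob[L=1]$. The extra factor $1+\gamma$ is what lets the order of $m_W^+$ and $m_V^+$ reverse even though $m_W\le m_V\le 0$: it suffices to make $\gamma_W$ markedly smaller than $\gamma_V$.

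Concretely, at $N=1$, $i=1$ I would take the common involution $\pi=(a\,\bar a)(c)$ and the symmetric channels $V(\cdot|0)=(0.2,0.3,0.5)$ and $W(\cdot|0)=(0.44,0.56,0)$ on $(a,\bar a,c)$ (with the input-$1$ rows determined by $\pi$), so that $L_V\in\{3/2,1,2/3\}$, $m_V=-0.1$, and $m_W=-0.12$. Then condition A ($m_V\le 0$) and condition B ($m_W\le m_V$) both hold at level one, while $m_V^+=-0.1\cdot1.5=-0.15$ and $m_W^+=-0.12\cdot1=-0.12$ give $Pe_2^{(2)}(W,V)-Pe_2^{(2)}(V)=\tfrac12(m_W^+-m_V^+)=0.015>0$, so condition B fails after the plus transform. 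The main obstacle is precisely this construction: one must produce legitimate symmetric B-DMCs under a shared permutation whose $V$-likelihood-ratio laws differ in their atom at $L=1$, and verify the i.i.d.\ half-block structure underlying the formulas for $m^{\pm}$; a small strictly positive perturbation of $\gamma_W$ keeps the example nondegenerate.
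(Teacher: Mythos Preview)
Your argument is correct. It is, however, packaged rather differently from the paper's proof, and the reduction you found is cleaner.

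For the minus case, the paper works with the difference $Pe_{2N}^{(2i-1)}(W,V)-Pe_{2N}^{(2i-1)}(V)$ directly: Proposition~\ref{prop::Pe_diff_recursion} expresses it as $\bigl[Pe_N^{(i)}(W,V)-Pe_N^{(i)}(V)\bigr]\cdot K_N$, and then Proposition~\ref{prop::Prob_W_ordering} is invoked (together with condition~A) to show $K_N\ge 0$. In your notation $K_N=-(m_W+m_V)$ and the bracket is $\tfrac12(m_W-m_V)$, so the paper's identity is exactly $m_W^{-}-m_V^{-}=-(m_W^2-m_V^2)$. You obtain this more transparently by proving the universal relation $m^{-}=-m^2$ via the sign identity $\mathbf{1}\{L^->1\}-\mathbf{1}\{L^-<1\}=-\sgn(1-a)\sgn(1-b)$ and independence of the half blocks; this bypasses both auxiliary propositions and makes the monotonicity (from $m_W\le m_V\le 0$) immediate. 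What the paper's decomposition buys is a formula for the \emph{difference} that also covers the plus case, but since that case ultimately requires a counterexample anyway, your scalar route loses nothing.

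For the failure of condition~B under the plus transform, your mechanism and the paper's are identical: take a ternary $V$ with a sizeable atom at $L_V=1$ and a $W$ (essentially a BSC embedded in the ternary alphabet) with no such atom, so that the factor $1+\gamma$ in $m^{+}=m(1+\gamma)$ differs between the two measures and reverses the order. The paper's concrete numbers are $W=\mathrm{BSC}(0.3)$ and $V(y\mid 0)=(0.4,0.5,0.1)$ with $L_V\in\{1/4,1,4\}$; yours are different but satisfy the same inequalities and yield the same conclusion.
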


\subsection{Proof of Theorem \ref{thm::one_step_preservation}}
We first introduce two propositions needed in the proof. The proof of the propositions are  given in the Appendix.
\begin{proposition}\label{prop::Pe_diff_recursion}
The quantities $Pe_{2N}^{(i)}(W, V) - Pe_{2N}^{(i)}(V)$ can be recursively computed as
 \begin{multline}\label{eq::minus_Pe_diff_recursion}
  Pe_{2N}^{(2i-1)}(W, V) - Pe_{2N}^{(2i-1)}(V) \\
= \displaystyle\sum_{y_{1}^{N}} \left[W(y_{1}^{N}|0_{1}^{N}) - V(y_{1}^{N}|0_{1}^{N})\right]  \mathbf{H}\left(L_{V_{N}^{(i)}}(y_{1}^{N})\right) K_{N}, 
\end{multline}
where  
\begin{multline}\label{eq::KN_def}
K_{N} = \left(\displaystyle\sum_{\begin{subarray}{c}
                          y_{N+1}^{2N}: \\
			  L_{V_{N}^{(i)}}(y_{N+1}^{2N})< 1
                         \end{subarray}} \left[W(y_{N+1}^{2N}|0_{1}^{N}) + V(y_{N+1}^{2N}|0_{1}^{N})\right]\right. \\
-\left.\displaystyle\sum_{\begin{subarray}{c}
                          y_{N+1}^{2N}: \\
			  L_{V_{N}^{(i)}}(y_{N+1}^{2N}) > 1
                         \end{subarray}} \left[W(y_{N+1}^{2N}|0_{1}^{N}) + V(y_{N+1}^{2N}|0_{1}^{N})\right]\right),
\end{multline}  
and \small
\begin{multline}%\label{eq::plus_Pe_diff_recursion}
Pe_{2N}^{(2i)}(W, V) - Pe_{2N}^{(2i)}(V) = \displaystyle\sum_{y_{1}^{2N}} \left[W(y_{1}^{N}|0_{1}^{N}) - V(y_{1}^{N}|0_{1}^{N}) \right] \times \\
\left[W(y_{N+1}^{2N}|0_{1}^{N}) + V(y_{N+1}^{2N}|0_{1}^{N}) \right] \mathbf{H}\left(L_{V_{N}^{(i)}}(y_{1}^{N})L_{V_{N}^{(i)}}(y_{N+1}^{2N})\right). 
\end{multline}\normalsize
\end{proposition}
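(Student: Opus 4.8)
The plan is to evaluate both $Pe_{2N}^{(j)}(W,V)$ and the matched quantity $Pe_{2N}^{(j)}(V)=Pe_{2N}^{(j)}(V,V)$ through the symmetric-channel formula of Proposition~\ref{prop::Pe_i_sym} applied at block length $2N$, and then to insert the likelihood-ratio recursions of Proposition~\ref{prop::L_sym_func}. Using the product structure $W(y_1^{2N}|0_1^{2N})=W(y_1^N|0_1^N)\,W(y_{N+1}^{2N}|0_1^N)$ (and likewise for $V$), and writing $a=W(y_1^N|0_1^N)$, $b=V(y_1^N|0_1^N)$, $c=W(y_{N+1}^{2N}|0_1^N)$, $d=V(y_{N+1}^{2N}|0_1^N)$, $L_1=L_{V_N^{(i)}}(y_1^N)$, $L_2=L_{V_N^{(i)}}(y_{N+1}^{2N})$, each difference becomes $\sum_{y_1^{2N}}(ac-bd)\,\Psi(L_1,L_2)$, where $\Psi=\mathbf{H}(L_1L_2)$ for $j=2i$ and $\Psi=\mathbf{H}\!\big(\tfrac{L_1+L_2}{1+L_1L_2}\big)$ for $j=2i-1$. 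The structural fact I will lean on, furnished by Proposition~\ref{prop::L_sym_func}, is that $\Psi$ is symmetric in $(L_1,L_2)$, hence invariant under the involution swapping the two output blocks $y_1^N\leftrightarrow y_{N+1}^{2N}$.

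First I would symmetrize the difference of products via $ac-bd=\tfrac12\big[(a-b)(c+d)+(a+b)(c-d)\big]$: under the block swap, $(a+b)(c-d)\Psi$ is carried to $(a-b)(c+d)\Psi$ while the summation domain is preserved, so the two halves coincide and
\[
Pe_{2N}^{(j)}(W,V)-Pe_{2N}^{(j)}(V)=\sum_{y_1^{2N}}(a-b)(c+d)\,\Psi(L_1,L_2)
\]
for both $j=2i$ and $j=2i-1$. For $j=2i$ this is already the asserted identity, since $\Psi=\mathbf{H}\big(L_{V_N^{(i)}}(y_1^N)L_{V_N^{(i)}}(y_{N+1}^{2N})\big)$; no further manipulation is required.

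For $j=2i-1$ it remains to perform the inner summation over $y_{N+1}^{2N}$. The computational key is the identity $\tfrac{L_1+L_2}{1+L_1L_2}-1=-\tfrac{(1-L_1)(1-L_2)}{1+L_1L_2}$, whose denominator is positive, so that $\mathbf{H}\big(\tfrac{L_1+L_2}{1+L_1L_2}\big)$ is dictated by the sign of $(1-L_1)(1-L_2)$ (the infinite/boundary cases being checked directly by the same rule). Abbreviating $P_{<}=\sum_{y_{N+1}^{2N}:\,L_2<1}(c+d)$, $P_{>}=\sum_{y_{N+1}^{2N}:\,L_2>1}(c+d)$, $P_{=}=\sum_{y_{N+1}^{2N}:\,L_2=1}(c+d)$, so that $K_N=P_{<}-P_{>}$ is exactly \eqref{eq::KN_def}, I would split the three cases $L_1>1$, $L_1=1$, $L_1<1$ (where $\mathbf{H}(L_1)$ equals $1$, $\tfrac12$, $0$) and verify in each that
\[
\sum_{y_{N+1}^{2N}}(c+d)\,\Psi(L_1,L_2)=\mathbf{H}(L_1)\,K_N+\Big(P_{>}+\tfrac12P_{=}\Big).
\]
The step I expect to be the crux is checking that the additive remainder $P_{>}+\tfrac12P_{=}$ emerges identically in all three cases, even though $\mathbf{H}(L_1)K_N$ does not; this makes the remainder independent of $y_1^N$. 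Substituting back into the symmetrized expression and using $\sum_{y_1^N}(a-b)=\sum_{y_1^N}\big[W(y_1^N|0_1^N)-V(y_1^N|0_1^N)\big]=0$, the constant remainder is annihilated and what survives is precisely \eqref{eq::minus_Pe_diff_recursion}. Apart from this three-way bookkeeping, the argument is routine once the block-swap symmetry of Proposition~\ref{prop::L_sym_func} has been invoked.
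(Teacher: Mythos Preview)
Your proposal is correct and follows essentially the same route as the paper: both first derive the symmetrized identity $Pe_{2N}^{(j)}(W,V)-Pe_{2N}^{(j)}(V)=\sum_{y_1^{2N}}(a-b)(c+d)\Psi$ via the block-swap symmetry of Proposition~\ref{prop::L_sym_func} (this is exactly the paper's equation~\eqref{eq::Pe_diff_av}, argued as in Proposition~\ref{prop::P_diff_av}), and both then handle the minus case by splitting on the sign of $L_1-1$, computing the inner sum over $y_{N+1}^{2N}$, and using $\sum_{y_1^N}(a-b)=0$ to annihilate the piece that does not carry the factor $\mathbf{H}(L_1)$. Your case-by-case verification that the inner sum equals $\mathbf{H}(L_1)K_N+(P_{>}+\tfrac12P_{=})$ is just a slightly more direct packaging of the paper's algebra with $\overline{\mathbf{H}}=1-\mathbf{H}$ and $1-2\mathbf{H}(L_2)=\mathbf{1}\{L_2<1\}-\mathbf{1}\{L_2>1\}$.
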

\begin{proposition}\label{prop::Prob_W_ordering}
 Assume $W$ and $V$ are B-DMCs such that the conditions A and B of Theorem \ref{thm::one_step_preservation} hold for a given $N = 2^n$ with $n=0, 1, 2,\ldots$ and a given $i=1, \ldots, N$. Then,
\begin{equation}
 \Prob_{W}\left[L_{V_{N}^{(i)}}(y_{1}^{N}) < 1\right] \geq  \Prob_{W}\left[L_{V_{N}^{(i)}}(y_{1}^{N}) > 1\right]. 
\end{equation}
\end{proposition}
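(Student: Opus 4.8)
The plan is to reduce the claim to the elementary fact that the error parameter of a synthesized channel is at most $1/2$, and to exploit that this half-threshold property is equivalent to the mass ordering we are after. The starting point is the symmetric-channel expression of Proposition \ref{prop::Pe_i_sym}, which gives
\begin{equation*}
Pe_{N}^{(i)}(W, V) = \Prob_{W}\left[L_{V_{N}^{(i)}}(y_{1}^{N}) > 1\right] + \tfrac{1}{2}\,\Prob_{W}\left[L_{V_{N}^{(i)}}(y_{1}^{N}) = 1\right],
\end{equation*}
together with the analogous identity for $Pe_{N}^{(i)}(V)$ (replace $W$ by $V$). Since the three masses $\Prob[L_{V_{N}^{(i)}} < 1]$, $\Prob[L_{V_{N}^{(i)}} = 1]$, $\Prob[L_{V_{N}^{(i)}} > 1]$ sum to one under either $W$ or $V$, the entire argument is a bookkeeping manipulation of these two identities and this normalization.

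First I would establish $Pe_{N}^{(i)}(V) \leq 1/2$. Eliminating the equality mass by normalization under $V$ recasts the parameter as $Pe_{N}^{(i)}(V) = \tfrac{1}{2}\bigl(1 + \Prob_{V}[L_{V_{N}^{(i)}} > 1] - \Prob_{V}[L_{V_{N}^{(i)}} < 1]\bigr)$, which is at most $1/2$ precisely because condition A asserts $\Prob_{V}[L_{V_{N}^{(i)}} < 1] \geq \Prob_{V}[L_{V_{N}^{(i)}} > 1]$. Next, condition B, namely $Pe_{N}^{(i)}(W, V) \leq Pe_{N}^{(i)}(V)$, transfers this bound to the mismatched parameter, yielding $Pe_{N}^{(i)}(W, V) \leq 1/2$. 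Finally, I would unpack this last inequality through the displayed identity and normalization under $W$: the statement $\Prob_{W}[L_{V_{N}^{(i)}} > 1] + \tfrac{1}{2}\Prob_{W}[L_{V_{N}^{(i)}} = 1] \leq 1/2$ rearranges to exactly $\Prob_{W}[L_{V_{N}^{(i)}} < 1] \geq \Prob_{W}[L_{V_{N}^{(i)}} > 1]$, which is the claimed ordering.

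I do not expect a genuine analytical obstacle; the difficulty is purely organizational, keeping the three probability masses and the two channels straight. The one hypothesis that must be invoked with care is that $W$ and $V$ are symmetrized under the same permutation, which is part of the statement of Theorem \ref{thm::one_step_preservation} and is precisely what licenses the use of Proposition \ref{prop::Pe_i_sym}. The conceptual point worth emphasizing is that each ``$Pe \leq 1/2$'' statement is \emph{equivalent} to the corresponding mass ordering (for $V$ this is condition A, for $W$ it is the conclusion), so the proposition is really the transitive observation that the domination in condition B carries the half-threshold from the matched parameter of $V$ to the mismatched parameter of the pair $(W,V)$.
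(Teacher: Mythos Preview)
Your argument is correct and is essentially the same as the paper's: both exploit the identity $Pe = \Prob[L>1] + \tfrac{1}{2}\Prob[L=1] = \tfrac{1}{2}\bigl(1 + \Prob[L>1] - \Prob[L<1]\bigr)$, use condition A to get $Pe_{N}^{(i)}(V)\le 1/2$, and then condition B to transfer this to $Pe_{N}^{(i)}(W,V)\le 1/2$, which is equivalent to the desired ordering. The paper reaches the same endpoint by writing $Pe_{N}^{(i)}(W,V)-Pe_{N}^{(i)}(V)$ in two complementary forms and adding them, but the content is identical.
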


\begin{proof}[Proof of Theorem \ref{thm::one_step_preservation}]
\begin{enumerate}
 \item[A$^{\pm}$)]  We know condition A is preserved by Proposition \ref{prop::LR_mass_preservation}.  	 	
 \item[B$^{-}$)] For the minus transformation, we have by Proposition \ref{prop::Pe_diff_recursion}
	  \begin{multline}\label{eq::Pe_diff_minus}\small
	   Pe_{2N}^{(2i-1)}(W,V) - Pe_{2N}^{(2i-1)}(V) \\ 
= \left[Pe_{N}^{(i)}(W,V) - Pe_{N}^{(i)}(V)\right] K_{N}. 
	  \end{multline}\normalsize
Now, we claim that $K_N \geq 0$, from which the sign of $ Pe_{2N}^{(2i-1)}(W,V) - Pe_{2N}^{(2i-1)}(V) \leq 0$ follows. To prove the claim, note that 
by equation \eqref{eq::KN_def}, the constant $K_{N}$ equals to 
\begin{align}
&\Prob_{W}\left[L_{V_{N}^{(i)}}\left(y_{N+1}^{2N}\right) < 1\right] + \Prob_{V}\left[L_{V_{N}^{(i)}}\left(y_{N+1}^{2N}\right) < 1\right] \nonumber\\
- &\Prob_{W}\left[L_{V_{N}^{(i)}}\left(y_{N+1}^{2N}\right) > 1\right] - \Prob_{V}\left[L_{V_{N}^{(i)}}\left(y_{N+1}^{2N}\right) > 1\right].
\end{align}
Then, the non-negativity of $K_N$ follows by condition A and Proposition \ref{prop::Prob_W_ordering} which shows the conditions A and B imply
\begin{equation}
\Prob_{W}\left[L_{V_{N}^{(i)}}(y_{1}^{N}) < 1\right] \geq  \Prob_{W}\left[L_{V_{N}^{(i)}}(y_{1}^{N}) > 1\right].
\end{equation}
\item[B$^{+}$)] We give a counterexample: Let $W$ be a BSC of crossover probability $0.3$ and $V$ a symmetric B-DMC with $\mathcal{Y} = \{0, e, 1\}$ such that the LRs take the values $\{1/4, 1, 4 \}$ with probabilities $V(y|0) = \{0.4, 0.5, 0.1\}$, respectively. One can check that although conditions A and B are satisfied for $N=1$ and $i=1$, condition B fails to hold after the plus transformation for $N=2$ and $i=2$. 
\end{enumerate}
\end{proof} 

We saw in Theorem \ref{thm::one_step_preservation} that we need to impose some more constraints on the mismatch channel to be used if we want to ensure condition B is preserved under both transformations. 
%Here, we first make a small digression. 

Consider the mismatched Bhattacharyya parameter we defined as $Z(W, V) = \displaystyle\sum_{y} W(y|0)\sqrt{L_{V}(y)}$. After applying the plus polar transformation we get 
$Z_{2N}^{(2i)}(W, V) = Z_{N}^{(i)}(W, V)^2$ as in the matched case shown in \cite{1669570}. Therefore, we have
\begin{equation}
 Z_{N}^{(i)}(W, V) - Z_{N}^{(i)}(V) \leq 0 \Rightarrow Z_{2N}^{(2i)}(W, V) - Z_{2N}^{(2i)}(V) \leq 0.
\end{equation}

In the next theorem, we explore the possible connection of such a result with Theorem \ref{thm::one_step_preservation}.

\begin{theorem}\label{thm::order_Pe_Z}
Assume the channels $W$ and $V$ described in the hypothesis of Theorem \ref{thm::one_step_preservation} also satisfy the following conditions for any $N = 2^{n}$ with $n = 1, 2, \ldots$ and for any $i=1, \ldots, N$:
\begin{equation}\label{eq::Pe_Z_relation}
  Pe_{N}^{(i)}(W,V) - Pe_{N}^{(i)}(V) < 0 \quad \hbox{iff} \quad Z_{N}^{(i)}(W,V) - Z_{N}^{(i)}(V) < 0,
\end{equation}
Then, the condition $B$ of Theorem \ref{thm::one_step_preservation} is preserved under both polar transformations.
\end{theorem}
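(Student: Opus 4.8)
The plan is to split the argument according to the two polar transformations, because the minus branch demands nothing new. Since the hypotheses of Theorem~\ref{thm::order_Pe_Z} contain those of Theorem~\ref{thm::one_step_preservation}, the minus-transformation analysis already carried out there shows that condition~B at level $N$, index $i$, is inherited by level $2N$, index $2i-1$, with no extra assumptions. The entire burden therefore falls on the plus transformation, which is precisely the branch for which Theorem~\ref{thm::one_step_preservation} exhibited a counterexample; the extra hypothesis~\eqref{eq::Pe_Z_relation} must be what rescues that branch.

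For the plus branch the idea is to route the argument through the mismatched Bhattacharyya parameter, whose plus-transformation behaviour is multiplicative and hence sign-friendly. Recall $Z_{2N}^{(2i)}(W,V)=Z_{N}^{(i)}(W,V)^2$ and, in the matched case, $Z_{2N}^{(2i)}(V)=Z_{N}^{(i)}(V)^2$. Since every such $Z$ is a sum of non-negative terms and is thus itself non-negative, I would factor
\begin{equation*}
Z_{2N}^{(2i)}(W,V)-Z_{2N}^{(2i)}(V)=\left(Z_{N}^{(i)}(W,V)-Z_{N}^{(i)}(V)\right)\left(Z_{N}^{(i)}(W,V)+Z_{N}^{(i)}(V)\right),
\end{equation*}
note that the second factor is $\geq 0$, and conclude that the plus transformation preserves the sign of the $Z$-difference. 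This is exactly the implication $Z_{N}^{(i)}(W,V)-Z_{N}^{(i)}(V)\leq 0 \Rightarrow Z_{2N}^{(2i)}(W,V)-Z_{2N}^{(2i)}(V)\leq 0$ recorded just before the theorem, and it is the one structural feature that the $Pe$ parameter lacks, which is why the detour through $Z$ is needed.

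With this ingredient the proof assembles as a chain of implications alternating between the two parameters. Beginning from condition~B at level $N$, namely $Pe_{N}^{(i)}(W,V)-Pe_{N}^{(i)}(V)<0$, I would invoke~\eqref{eq::Pe_Z_relation} to pass to $Z_{N}^{(i)}(W,V)-Z_{N}^{(i)}(V)<0$, apply the sign preservation above to get $Z_{2N}^{(2i)}(W,V)-Z_{2N}^{(2i)}(V)<0$, and then invoke~\eqref{eq::Pe_Z_relation} once more, now at level $2N$ and index $2i$, to return to $Pe_{2N}^{(2i)}(W,V)-Pe_{2N}^{(2i)}(V)<0$, which is condition~B one level up. Schematically,
\begin{equation*}
Pe_{N}^{(i)}\text{ diff}<0 \;\Leftrightarrow\; Z_{N}^{(i)}\text{ diff}<0 \;\Rightarrow\; Z_{2N}^{(2i)}\text{ diff}<0 \;\Leftrightarrow\; Pe_{2N}^{(2i)}\text{ diff}<0.
\end{equation*}

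The main obstacle I anticipate is the mismatch between the strict inequalities in~\eqref{eq::Pe_Z_relation} and the non-strict form ($\leq 0$) in which condition~B is phrased. The chain above cleanly covers the strictly negative case, but the boundary case $Pe_{N}^{(i)}(W,V)=Pe_{N}^{(i)}(V)$ is not directly reachable through~\eqref{eq::Pe_Z_relation}, since ``$<0$ iff $<0$'' only yields ``$\geq 0$ iff $\geq 0$'' and does not by itself force the two equalities to coincide. I would dispose of this case either by reading~\eqref{eq::Pe_Z_relation} as a genuine sign equality (so that a vanishing $Pe$-difference forces a vanishing $Z$-difference, which the squaring identity then propagates as an equality to the next level), or by arguing that an index with $Pe_{N}^{(i)}(W,V)=Pe_{N}^{(i)}(V)$ corresponds to a polarized, completely noisy synthetic channel and hence is excluded from the information set, exactly as in the discussion following the proof of Theorem~\ref{thm::thm2}. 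Settling this equality case is the only delicate point; everything else reduces to the two lines of algebra in the squaring identity.
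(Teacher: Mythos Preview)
Your proposal is correct and matches the paper's own approach: the paper does not write out a formal proof of this theorem, but the argument it intends is precisely the one you describe---the minus branch is already covered by Theorem~\ref{thm::one_step_preservation}, and the plus branch is handled by the squaring identity $Z_{2N}^{(2i)}(W,V)=Z_{N}^{(i)}(W,V)^2$ together with the biconditional~\eqref{eq::Pe_Z_relation}, exactly as in the displayed implication immediately preceding the theorem statement. Your observation about the strict-versus-nonstrict boundary case is a genuine subtlety that the paper itself leaves unaddressed.
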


The theorem statement simply tells that if the Bhattacharyya upper bounds follow the same behavior as their $Pe_{N}^{(i)}$ counterparts;
which can occur if for instance they are sufficiently tight for both the matched and mismatched error probabilities at any level, 
then as long as we design the polar code for a mismatched channel $V$ such that $Pe(W,V) \leq Pe(V)$ is satisfied, we are safe to use the code over the channel $W$.  
Although Theorem \ref{thm::order_Pe_Z} provides a partial solution to the design problem, unfortunately it is non-constructive at this stage. 
We would need to study which channels could satisfy these type of constraints.

\section{Discussions}
We took a designer's perspective to analyze the performance of mismatched polar codes, and we identified in Theorem \ref{thm::thm2} conditions under which
the polar code designed using Ar{\i}kan's original construction method \cite{1669570} for a given B-DMC can be used reliably for a mismatched channel. 
Are these conditions $(i)$, $(ii)$, and $(iii)$ given in Theorem \ref{thm::thm2} terrestrial? 
We give a positive answer by showing the set of BSCs of crossover probabilities $\epsilon_W \leq \epsilon_V \leq 0.5$ satisfy the three conditions:
$(i)$ is equivalent to $1-\epsilon_V \geq \epsilon_V$, $(ii)$ is equivalent to $\epsilon_W \leq \epsilon_V$, and $(iii)$ to $1-\epsilon_W \geq 1 - \epsilon_V$.
As illustrated in this specific example the conditions are rather natural ones, and perhaps, they even hold for other specific class of channels.

The robustness of polar codes over BSCs have also been previously discussed in \cite{Mine:ISITA12}. 
%However, the arguments provided in that paper are different than the ones discussed here. 
%For the sake of clarity, we briefly point out the difference. 
Theorem 1 in \cite{Mine:ISITA12} shows that 
replacing the minus polar transformation with a specific approximation results in the LRs of the synthesized channels $W_{N}^{(i)}$ and $V_{N}^{(i)}$ to be ordered for each $i = 1, \ldots, N$ as
\begin{align}
&1 \leq \tilde{L}_{V_{N}^{(i)}}\left(y_{1}^{N}, u_{1}^{i-1} \right) \leq \tilde{L}_{W_{N}^{(i)}}\left(y_{1}^{N}, u_{1}^{i-1}\right), \\
\hbox{or} \quad &\tilde{L}_{W_{N}^{(i)}}\left(y_{1}^{N}, u_{1}^{i-1}\right) \leq \tilde{L}_{V_{N}^{(i)}}\left(y_{1}^{N}, u_{1}^{i-1}\right) \leq 1,
\end{align}
where the symbol $\tilde{\hspace{3mm}}$ indicates computations use the approximation. 
%Indeed, the approximated LRs of the worst BSC will be closer to $1$. 
So, the decoder estimate for a given output realization 
will be identical whether the computations are performed with respect to the approximated LRs of the channel $W$ or the channel $V$. In this case, for any $i=1, \ldots, N$, $\tilde{Pe}_{N}^{(i)}(W, V) = \tilde{Pe}_{N}^{(i)}(W)$ holds as well. Although the decoder is completely robust, no theoretical analysis is provided to argue what rates can ultimately be achieved by a successive cancellation decoder using the approximate computations. On the other hand, here, a consequence of Theorem 1 is that the compound capacity \cite{article:compound} of the set of BSCs, i.e. the capacity of the worst BSC in the set, is achievable by the polar code designed for this worst channel.

\section{Acknowledgment}
The author would like to thank Emre Telatar for helpful discussions.
This work was supported by Swiss National Science Foundation under grant number 200021-125347/1. 

\section{Appendix}
In this Appendix we prove Lemma \ref{lem::L_equals_one_supermartingale} and Propositions \ref{prop::LR_mass_preservation}, \ref{prop::Pe_diff_recursion}, and \ref{prop::Prob_W_ordering}. \\

\begin{proof}[Proof of Lemma \ref{lem::L_equals_one_supermartingale}]
The boundedness claim is trivial. Let $L_1 = L_{V_N^{(i)}}(y_1^{N})$ and $L_2 = L_{V_N^{(i)}}(y_{N+1}^{2N})$ for simplicity. We first note that
\begin{equation}\label{eq::Prob_L_one_minus}
\Prob_{W}\left[\displaystyle\frac{L_1+L_2}{1+L_1L_2}= 1 \right] = 2\Prob_{W}\left[L = 1 \right] - \Prob_{W}\left[L = 1 \right]^2,
\end{equation}
\begin{equation}
 \Prob_{W}\left[L_1L_2 = 1 \right] \geq \Prob_{W}\left[L = 1 \right]^2
\end{equation}
where we used the fact that $\Prob_{W}\left[L = 1 \right] \triangleq \Prob_{W}\left[L_1 = 1 \right] = \Prob_{W}\left[L_2 = 1 \right]$. Therefore,
\begin{equation}
\Prob_{W}\left[L_1L_2 = 1 \right] + \Prob_{W}\left[\displaystyle\frac{L_1+L_2}{1+L_1L_2}= 1 \right] \geq 2\Prob_{W}\left[L_1 = 1 \right].
\end{equation}
This inequality proves the process is a submartingale. By general results on bounded martingales, we know the process converges almost surely~\cite{1669570}. One can complete the proof that the convergence is to the extremes in a similar fashion to the proof carried in~\cite[Proposition 9]{1669570} of the convergence to the extremes of the Bhattacharyya parameters' process associated to the polarization transformations. 
By using \eqref{eq::Prob_L_one_minus}, we have
\begin{multline}
\Expt_{\pm}\left[|\Prob_{W}\left[ L^{\pm} = 1 \right] - \Prob_{W}\left[ L = 1 \right] |\right] \\
\geq \frac{1}{2}\Prob_{W}\left[L = 1 \right] \left(1 - \Prob_{W}\left[L = 1 \right]\right),
\end{multline}
and when the left side of this inequality goes to zero, $\{0, 1\}$ are the only possible values $\Prob_{W}\left[L = 1 \right]$ can take. 
\end{proof}
\hspace{10mm}\\
\begin{proof}[Proof of Proposition \ref{prop::LR_mass_preservation}]
For simplicity we define $L_{V_{N}^{(i)}}\left(y_{1}^{N}\right) = L_1$, $L_{V_{N}^{(i)}}\left(y_{N+1}^{2N}\right) = L_2$, and omit the subscript in $\Prob_{V}$. 
Note that by symmetry in the construction of polar codes $\Prob\left[L_1 < 1\right] = \Prob\left[L_2 < 1\right]$.

For the plus transformation, we use a property following from the symmetry of the channels
\begin{equation}\label{eq::sym_cond}
W(y|0) = \displaystyle\frac{W(y|1)}{L(y)} \Rightarrow \Prob\left[L(y) = \ell\right] = \displaystyle\frac{1}{\ell} \Prob\left[L(y)= \displaystyle\frac{1}{\ell}\right].
\end{equation}
We define the following notations
\begin{align}
 \Prob\left[ L_1 \gneq 1 \right] &\triangleq \Prob\left[ L_1 > 1 \right] + \frac{1}{2} \Prob\left[ L_1 = 1 \right],\\
 \Prob\left[ L_1 \lneq 1 \right] &\triangleq \Prob\left[ L_1 < 1 \right] + \frac{1}{2} \Prob\left[ L_1= 1 \right].
\end{align}
Then, we have
\begin{align} 
&\Prob\left[L_1L_2 \lneq 1\right] \nonumber\\
= &\displaystyle\sum_{\ell_1 < 1}\sum_{\ell_2 < 1} \Prob\left[L_1 = \ell_1\right]\Prob\left[L_2 = \ell_2\right]   \nonumber\\
&\hspace{1mm}+ \displaystyle\sum_{\ell_1 < 1}\sum_{1 \leq \ell_2 < 1/\ell_1} \Prob\left[L_1 = \ell_1\right]\Prob\left[L_2 = \ell_2\right]  \nonumber\\
&\hspace{1mm}+ \displaystyle\sum_{\ell_1 \geq 1}\sum_{\ell_2 \leq 1/\ell_1} \Prob\left[L_1 = \ell_1\right]\Prob\left[L_2 = \ell_2\right]   \nonumber\\
&\hspace{1mm}- \frac{1}{2}\Prob\left[L_1  = 1\right]^2 \\
= &\Prob\left[L_1  < 1\right]^2 - \frac{1}{2}\Prob\left[L_1  = 1\right]^2  \nonumber\\
&\hspace{1mm}+ \displaystyle\sum_{\ell_1 > 1}\sum_{1 \leq \ell_2 < \ell_1} \ell_1\Prob\left[L_1= \ell_1\right]\Prob\left[L_2 = \ell_2\right]  \nonumber\\
&\hspace{1mm}+ \displaystyle\sum_{\ell_1 \geq 1}\sum_{\ell_2 \geq \ell_1} \ell_2\Prob\left[L_1 = \ell_1\right]\Prob\left[L_2 = \ell_2\right]  \\
=  &\Prob\left[L_1  < 1\right]^2 - \frac{1}{2}\Prob\left[L_1  = 1\right]^2  \nonumber\\ 
&\hspace{1mm}+\displaystyle\sum_{\ell_1 > 1}\sum_{1 < \ell_2 < \ell_1} \ell_1\Prob\left[L_1= \ell_1\right]\Prob\left[L_2 = \ell_2\right] \nonumber\\
&\hspace{1mm}+ \Prob\left[L_1  = 1\right]\displaystyle\sum_{\ell_1 > 1}\ell_1\Prob\left[L_1= \ell_1\right]  \nonumber\\
&\hspace{1mm}+ \displaystyle\sum_{\ell_1 > 1}\sum_{\ell_2 \geq \ell_1} \ell_2\Prob\left[L_1 = \ell_1\right]\Prob\left[L_2 = \ell_2\right] \nonumber\\
+ &\Prob\left[L_1  = 1\right]\displaystyle\sum_{\ell_2 > 1}\ell_2\Prob\left[L_1= \ell_2\right] + \Prob\left[L_1  = 1\right]^2  \\
= &\Prob\left[L_1  < 1\right]^2 + \frac{1}{4}\Prob\left[L_1  = 1\right]^2 \nonumber\\
&\hspace{1mm}+ \Prob\left[L_1  = 1\right]\Prob\left[L_1  < 1\right]  \nonumber\\
+ &\displaystyle\sum_{\ell_1 > 1}\sum_{\ell_2 > 1} \Prob\left[L_1 = \ell_1\right]\Prob\left[L_2 = \ell_2\right] \max\{\ell_1, \ell_2\} \\
=  &\Prob\left[L_1  \lneq 1\right]^2   \nonumber\\ 
+\label{eq::L_plus_less_halfequal_than_one}&\hspace{1mm}\displaystyle\sum_{\ell_1 \gneq 1}\sum_{\ell_2 \gneq 1} \Prob\left[L_1 = \ell_1\right]\Prob\left[L_2 = \ell_2\right] \max\{\ell_1, \ell_2\} 
\end{align}
where we abuse the notation to define (note the $\gneq$ sign in the summation index)
\begin{align}
&\displaystyle\sum_{\ell_1 \gneq 1}\sum_{\ell_2 \gneq 1} \Prob\left[L_1 = \ell_1\right]\Prob\left[L_2 = \ell_2\right] \max\{\ell_1, \ell_2\}  \nonumber\\
&= \displaystyle\sum_{\ell_1 > 1}\sum_{\ell_2 > 1} \Prob\left[L_1 = \ell_1\right]\Prob\left[L_2 = \ell_2\right] \max\{\ell_1, \ell_2\}  \nonumber\\
&+ \Prob\left[L_1  = 1\right]\displaystyle\sum_{\ell_2 > 1}\ell_2\Prob\left[L_1= \ell_2\right] + \frac{1}{4}\Prob\left[L_1  = 1\right]^2. 
\end{align}
In the same spirit, we define
\begin{align}
&\displaystyle\sum_{\ell_1 \gneq 1}\sum_{\ell_2 \gneq 1} \Prob\left[L_1 = \ell_1\right]\Prob\left[L_2 = \ell_2\right] \min\{\ell_1, \ell_2\}  \nonumber\\
= &\displaystyle\sum_{\ell_1 > 1}\sum_{\ell_2 > 1} \Prob\left[L_1 = \ell_1\right]\Prob\left[L_2 = \ell_2\right] \min\{\ell_1, \ell_2\}  \nonumber\\
&+ \Prob\left[L_1 = 1\right]\Prob\left[L_1 > 1\right] + \frac{1}{4}\Prob\left[L_1 = 1\right]^2,
\end{align}
and we note that
\begin{align}\label{eq::min_plus_max}
&\displaystyle\sum_{\ell_1 \gneq 1}\sum_{\ell_2 \gneq 1} \Prob\left[L_1 = \ell_1\right]\Prob\left[L_2 = \ell_2\right] \times \nonumber\\
&\hspace{30mm} \left(\max\{\ell_1, \ell_2\} + \min\{\ell_1, \ell_2\} \right) \nonumber\\
=&\displaystyle\sum_{\ell_1 \gneq 1}\sum_{\ell_2 \gneq 1} \Prob\left[L_1 = \ell_1\right]\Prob\left[L_2 = \ell_2\right] (\ell_1 + \ell_2) \\
%=&2\displaystyle\sum_{\ell_1 \gneq 1} \Prob\left[L_1 = \ell_1\right] \ell_1 \sum_{\ell_2 \gneq 1} \Prob\left[L_2 = \ell_2\right] \\
=&2\Prob\left[L_1  \lneq 1\right]\Prob\left[L_1  \gneq 1\right].
\end{align}
As 
\begin{align}
1 &= \Prob\left[L_1L_2 \lneq 1\right] + \Prob\left[L_1L_2 \gneq 1\right] \\
&=  \left(\Prob\left[L_1\lneq 1\right] + \Prob\left[L_1 \gneq 1\right]\right)^2 \\
&= \Prob\left[L_1 \lneq 1\right]^2 + \Prob\left[L_1 \gneq 1\right]^2 \nonumber\\
&\hspace{30mm}+ 2 \Prob\left[L_1 \lneq 1\right]\Prob\left[L_1 \gneq 1\right]
\end{align}
must hold, we get
\begin{align}
&\Prob\left[L_1L_2 \gneq 1\right] = \Prob\left[L_1 \gneq 1\right]^2  \nonumber\\
\label{eq::L_plus_greater_halfequal_than_one} &+ \displaystyle\sum_{\ell_1 \gneq 1}\sum_{\ell_2 \gneq 1} \Prob\left[L_1 = \ell_1\right]\Prob\left[L_2 = \ell_2\right] \min\{\ell_1, \ell_2\}. 
\end{align}
Therefore, \eqref{eq::L_plus_less_halfequal_than_one} and \eqref{eq::L_plus_greater_halfequal_than_one} proves that
\begin{equation}
\Prob\left[L_1L_2 < 1\right] \geq \Prob\left[L_1L_2 > 1\right]
\end{equation}
holds as claimed. For the minus transformation, we have
\begin{equation}
\Prob\left[ \displaystyle\frac{L_1 + L_2}{1 + L_1 L_2} < 1\right] 
%= \Prob\left[L_1 < 1 \right]\Prob\left[L_2 < 1 \right] \\
%+ \Prob\left[L_1 > 1 \right]\Prob\left[L_2 > 1\right] \\
= \Prob\left[L_1 < 1 \right]^{2} + \Prob\left[L_1 > 1\right]^{2},
\end{equation}
\begin{equation}
\Prob\left[ \displaystyle\frac{L_1 + L_2}{1 + L_1 L_2} > 1\right] 
%= \Prob\left[L_1 < 1\right]\Prob\left[L_2 > 1\right] \\
% + \Prob\left[L_1 > 1\right]\Prob\left[L_2 < 1\right] \\
= 2 \Prob\left[L_1 < 1\right] \Prob\left[L_1 > 1\right]. 
\end{equation}
By noting that the difference of these equals
\begin{equation}
\left(\Prob\left[L_1 < 1\right] - \Prob\left[L_1 > 1\right]\right)^{2} \geq 0,
\end{equation}
the claim for the minus transformation is proved.
\end{proof}  
\begin{proof}[Proposition \ref{prop::Pe_diff_recursion}]
First note that for symmetric B-DMCs $W$ and $V$ symmetrized under the same permutation, we have 
\begin{multline}\label{eq::Pe_diff_av}
  Pe_{2N}^{(i)}(W, V) - Pe_{2N}^{(i)}(V) = \displaystyle\sum_{y_{1}^{N}} \left[W(y_{1}^{N}|0_{1}^{N}) - V(y_{1}^{N}|0_{1}^{N}) \right] \times \\
 \hspace{2mm}\left(\displaystyle\sum_{y_{N+1}^{2N}} \left[W(y_{N+1}^{2N}|0_{1}^{N}) + V(y_{N+1}^{2N}|0_{1}^{N}) \right] \mathbf{H}\left(L_{V_{2N}^{(i)}}(y_{1}^{2N})\right)\right),
\end{multline}
which can be proved similarly to Proposition \eqref{prop::P_diff_av}. For simplicity we define $L_{V_{N}^{(i)}}\left(y_{1}^{N}\right) = L_1$, $L_{V_{N}^{(i)}}\left(y_{N+1}^{2N}\right) = L_2$. First observe that 
\begin{equation}
 \mathbf{H}\left(\frac{L_1 + L_2}{1 + L_1 L_2} \right) \\
 = \left\{\begin{array}{ll}
      \displaystyle\frac{1}{2}, &\hbox{if}\quad L_1 = 1,\\
				&\hbox{or}\quad L_2 = 1 \\
			     1, &\hbox{if}\quad L_1 < 1\quad\hbox{and}\quad L_2 > 1,\\
				&\hbox{or}\quad L_1 > 1\quad\hbox{and}\quad L_2 < 1 \\
			     0, &\hbox{if}\quad L_1 < 1\quad\hbox{and}\quad L_2 < 1,\\ 
				&\hbox{or}\quad L_1 > 1\quad\hbox{and}\quad L_2 > 1
	  \end{array} \right. .
\end{equation}
Then, we have
\begin{align}
&Pe_{2N}^{(2i-1)}(W, V) - Pe_{2N}^{(2i-1)}(V) \nonumber\\
= &\displaystyle\sum_{ \begin{subarray}{c}
			  y_{1}^{N}: \\
			  L_1 = 1
                    \end{subarray} } \left[W(y_{1}^{N}|0_{1}^{N}) - V(y_{1}^{N}|0_{1}^{N})\right] \times 1 \nonumber\\
&+ \displaystyle\sum_{ \begin{subarray}{c}
			  y_{1}^{N}: \\
			  L_1 > 1
                    \end{subarray} } \left[W(y_{1}^{N}|0_{1}^{N}) - V(y_{1}^{N}|0_{1}^{N})\right] \times \nonumber\\
&\hspace{10mm}
 \displaystyle\sum_{\begin{subarray}{c}
                           y_{N+1}^{2N}: \\
 			  L_2 \leq 1
                          \end{subarray}} \left[W(y_{N+1}^{2N}|0_{1}^{N}) + V(y_{N+1}^{2N}|0_{1}^{N})\right] \nonumber\\
&+ \displaystyle\sum_{ \begin{subarray}{c}
			  y_{1}^{N}: \\
			  L_1 < 1
                    \end{subarray} } \left[W(y_{1}^{N}|0_{1}^{N}) - V(y_{1}^{N}|0_{1}^{N})\right] \times \nonumber\\
&\hspace{10mm}
 \displaystyle\sum_{\begin{subarray}{c}
                           y_{N+1}^{2N}: \\
 			  L_2 \geq 1
                          \end{subarray}} \left[W(y_{N+1}^{2N}|0_{1}^{N}) + V(y_{N+1}^{2N}|0_{1}^{N})\right] \\
= &\displaystyle\sum_{ \begin{subarray}{c}
			  y_{1}^{N}: \\
			  L_1 = 1
                    \end{subarray} } \left[W(y_{1}^{N}|0_{1}^{N}) - V(y_{1}^{N}|0_{1}^{N})\right] \times 1 \nonumber\\
&+ \displaystyle\sum_{ \begin{subarray}{c}
			  y_{1}^{N}: \\
			  L_1 > 1
                    \end{subarray} } \left[W(y_{1}^{N}|0_{1}^{N}) - V(y_{1}^{N}|0_{1}^{N})\right] \times \nonumber\\
&\hspace{10mm}
 \displaystyle\sum_{y_{N+1}^{2N}} \left[W(y_{N+1}^{2N}|0_{1}^{N}) + V(y_{N+1}^{2N}|0_{1}^{N})\right] \mathbf{\overline{H}}\left(L_2 \right) \nonumber\\
&+ \displaystyle\sum_{ \begin{subarray}{c}
			  y_{1}^{N}: \\
			  L_1 < 1
                    \end{subarray} } \left[W(y_{1}^{N}|0_{1}^{N}) - V(y_{1}^{N}|0_{1}^{N})\right] \times \nonumber\\
&\hspace{10mm}
 \displaystyle\sum_{y_{N+1}^{2N}} \left[W(y_{N+1}^{2N}|0_{1}^{N}) + V(y_{N+1}^{2N}|0_{1}^{N})\right] \mathbf{H}\left(L_2 \right).
\end{align}
where the ``complement'' function of $\mathbf{H}$ is defined as
\begin{equation}
\mathbf{\overline{H}}\left(L_1\right) \triangleq \mathbf{1}\{L_1 < 1\} 
+ \displaystyle\frac{1}{2} \mathbf{1}\{L_1 = 1\}.
\end{equation}
By substituting $\mathbf{\overline{H}}\left(L_1\right) = 1 - \mathbf{H}\left(L_1\right)$ and regrouping the terms, we obtain
\begin{multline}
Pe_{2N}^{(2i-1)}(W, V) - Pe_{2N}^{(2i-1)}(V) \\
= \displaystyle\sum_{y_{1}^{N}} \left[W(y_{1}^{N}|0_{1}^{N}) - V(y_{1}^{N}|0_{1}^{N})\right] 2\mathbf{H}\left(L_1\right) \\
+ \displaystyle\sum_{y_{1}^{N}} \left[W(y_{1}^{N}|0_{1}^{N}) - V(y_{1}^{N}|0_{1}^{N})\right] \left[1 - 2\mathbf{H}\left(L_1\right)\right] \times \\
\displaystyle\sum_{y_{N+1}^{2N}} \left[W(y_{N+1}^{2N}|0_{1}^{N}) + V(y_{N+1}^{2N}|0_{1}^{N})\right] \mathbf{H}\left(L_2\right),
\end{multline}
where we used the fact that $1 - 2\mathbf{H}\left(L_1\right) = \mathbf{1}\{L_1 < 1\} - \mathbf{1}\{L_1 > 1\}$.
Now, note that the term in the second summation with the $1$ sums to 0. Hence, we get
\begin{multline}
Pe_{2N}^{(2i-1)}(W, V) - Pe_{2N}^{(2i-1)}(V) \\
= \displaystyle\sum_{y_{1}^{N}} \left[W(y_{1}^{N}|0_{1}^{N}) - V(y_{1}^{N}|0_{1}^{N})\right] \mathbf{H}\left(L_1 \right) \times \\
\left[2 - \displaystyle\sum_{y_{N+1}^{2N}} \left[W(y_{N+1}^{2N}|0_{1}^{N}) + V(y_{N+1}^{2N}|0_{1}^{N})\right] 2\mathbf{H}\left(L_2 \right) \right] \\
= \displaystyle\sum_{y_{1}^{N}} \left[W(y_{1}^{N}|0_{1}^{N}) - V(y_{1}^{N}|0_{1}^{N})\right] \mathbf{H}\left(L_1 \right) \times \\
\displaystyle\sum_{y_{N+1}^{2N}} \left[W(y_{N+1}^{2N}|0_{1}^{N}) + V(y_{N+1}^{2N}|0_{1}^{N})\right] \left[1 - 2\mathbf{H}\left(L_2 \right) \right]. 
\end{multline}
We recover Equation \eqref{eq::minus_Pe_diff_recursion} upon noticing $K_N$ defined in \eqref{eq::KN_def} equals 
\begin{equation}
\displaystyle\sum_{y_{N+1}^{2N}} \left[W(y_{N+1}^{2N}|0_{1}^{N}) + V(y_{N+1}^{2N}|0_{1}^{N})\right] \left[1 - 2\mathbf{H}\left(L_2 \right)\right]
\end{equation}
as $1 - 2\mathbf{H}\left(L_2 \right) = \mathbf{1}\{L_2 < 1\} - \mathbf{1}\{L_2 > 1\}$. This proves the claim for the minus transformation. The claim for the plus transformation can be obtained directly by the expression given in \eqref{eq::Pe_diff_av}.
\end{proof}

\begin{proof}[Proof of Proposition \ref{prop::Prob_W_ordering}]
We have
\begin{align}
 &\Prob_{W}\left[L(y_{1}^{N}) > 1\right] + \frac{1}{2}\Prob_{W}\left[L(y_{1}^{N}) = 1\right] \nonumber\\  &\hspace{10mm}-\Prob_{V}\left[L(y_{1}^{N}) > 1\right] - \frac{1}{2}\Prob_{V}\left[L(y_{1}^{N}) = 1\right] \nonumber\\
= &\Prob_{V}\left[L(y_{1}^{N}) < 1\right] + \frac{1}{2}\Prob_{V}\left[L(y_{1}^{N}) = 1\right] \nonumber\\  &\hspace{10mm}-\Prob_{W}\left[L(y_{1}^{N}) < 1\right] - \frac{1}{2}\Prob_{W}\left[L(y_{1}^{N}) = 1\right] \leq 0,
\end{align}
where the negativity follows by condition B. Therefore, adding both sides gives
\begin{multline}
 \Prob_{W}\left[L(y_{1}^{N}) > 1\right] - \Prob_{V}\left[L(y_{1}^{N}) > 1\right] \\ + \Prob_{V}\left[L(y_{1}^{N}) < 1\right] - \Prob_{W}\left[L(y_{1}^{N}) < 1\right] \leq 0.
\end{multline}
Hence,
\begin{multline}
 \Prob_{W}\left[L(y_{1}^{N}) < 1\right] - \Prob_{W}\left[L(y_{1}^{N}) > 1\right] \\
 \geq \Prob_{V}\left[L(y_{1}^{N}) < 1\right] - \Prob_{V}\left[L(y_{1}^{N}) > 1\right] \geq 0,
\end{multline}
where the non-negativity follows by condition A.
\end{proof}
\footnotesize

\normalsize

\end{document}